\newtheorem{theorem}{Theorem}
\newtheorem{lemma}{Lemma}
\newtheorem{observation}[lemma]{Observation}
\newtheorem{definition}[lemma]{Definition}
\newtheorem{corollary}[lemma]{Corollary}
\newtheorem*{lem:circulationFaces}{Lemma~\ref*{lem:circulationFaces}}
\newtheorem*{lem:faceToEdge}{Lemma~\ref*{lem:faceToEdge}}
\newcommand{\abs}[1]{\lvert #1 \rvert}
\newcommand{\chev}[1]{\langle #1 \rangle} 
\renewcommand{\P}{\mathcal{P}}
\newcommand{\Z}{\mathbb Z}
\newcommand{\N}{\mathbb N}
\newcommand{\T}{\mathcal T}
\newcommand{\Gt}{\chev{\P_c}_3}
\newlang{\searchPM}{\textsc{Search-PM}}
\newlang{\decisionPM}{\textsc{Decision-PM}}
\newlang{\minwtPM}{\textsc{Min-Weight-PM}}
\newlang{\countPM}{\textsc{Count-PM}}
\newlang{\FL}{FL}
\renewcommand{\path}{\operatorname{path}}
\newcommand{\wt}{\mathsf{wt}}
\newcommand{%
  \begin{picture}(0,0)%
  \includegraphics{.pdf}%
  \end{picture}%
  \input{.figtex}%
}[1]{%
  \begin{picture}(0,0)%
  \includegraphics{#1.pdf}%
  \end{picture}%
  \input{#1.figtex}%
}
\newcommand{\plainabstract}[1]{}
\begin{document}

\title{Derandomizing Isolation Lemma for $K_{3,3}$-free and $K_5$-free Bipartite Graphs}
\author{Rahul Arora}
\author{Ashu Gupta} 
\author{Rohit Gurjar}
\author{Raghunath Tewari}
\affil{{\footnotesize arorar@iitk.ac.in, ashug@iitk.ac.in, rgurjar@iitk.ac.in, rtewari@iitk.ac.in}\\
Indian Institute of Technology Kanpur}
\date{\vspace{-15mm}}
\maketitle
\begin{abstract}
The perfect matching problem has a randomized $\NC$ algorithm, using the celebrated
Isolation Lemma of Mulmuley, Vazirani and Vazirani.
The Isolation Lemma states that giving a random weight assignment to the edges of a graph,
ensures that it has a unique minimum weight perfect matching, with a good probability. 
We derandomize this lemma for $K_{3,3}$-free and $K_5$-free bipartite graphs, i.e.\
we give a deterministic log-space construction of such a weight assignment for these graphs. 
Such a construction was known previously for planar bipartite graphs. 
Our result implies that the perfect matching problem 
for $K_{3,3}$-free and $K_5$-free bipartite graphs
is in $\SPL$.
It also gives an alternate proof for an already known result -- reachability 
for $K_{3,3}$-free and $K_5$-free graphs is in $\UL$.
\end{abstract}

\plainabstract{
The perfect matching problem has a randomized NC algorithm, using the celebrated
Isolation Lemma of Mulmuley, Vazirani and Vazirani. The Isolation Lemma states that giving a random weight assignment to the edges of a graph, ensures that it has a unique minimum weight perfect matching, with a good probability. 
We derandomize this lemma for K_{3,3}-free and K_5-free bipartite graphs, i.e.
we give a deterministic log-space construction of such a weight assignment for these graphs. Such a construction was known previously for planar bipartite graphs. 
Our result implies that the perfect matching problem for K_{3,3}-free and K_5-free bipartite graphs is in SPL. It also gives an alternate proof for an already known result -- reachability for K_{3,3}-free and K_5-free graphs is in UL.
}

\section{Introduction}
The perfect matching problem is one of the most extensively studied problem
in combinatorics, algorithms and complexity.
In complexity theory, the problem plays a crucial role in the study of 
parallelization and derandomization. 
In a graph $G(V,E)$, a {\em matching} is a set of disjoint edges and 
a matching is called {\em perfect} if it covers all the vertices of the graph.
Edmonds \cite{Edm65} gave the first polynomial time algorithm for the matching problem. 
Since then, there have been improvements in its sequential complexity \cite{MV80},
but an $\NC$ (efficient parallel) algorithm for it is not known.
The perfect matching problem has various versions: 
\begin{itemize}[noitemsep]
\vspace{-5pt}
\item $\decisionPM$: Decide if there exists a perfect matching in the given graph. 
\item $\searchPM$: Construct a perfect matching in the given graph, if it exists.
\end{itemize}

A randomized $\NC$ ($\RNC$) algorithm for $\decisionPM$ was given by \cite{Lov79}. 
Subsequently, $\searchPM$ was also shown to be in $\RNC$ \cite{KUW86,MVV87}.
The solution of Mulmuley et al.\ \cite{MVV87} was based on the powerful idea of {\em Isolation Lemma}. 
They defined a notion of an isolating weight assignment on the edges of a graph. 
Given a weight assignment on the edges, weight of a matching
$M$ is defined to be the sum of the weights of all the edges in it.  

\begin{definition}[\cite{MVV87}]
For a graph $G(V,E)$, a weight assignment $w \colon E \to \N$ is isolating
if 
there exists a unique minimum weight perfect matching in $G$, according to $w$.
\end{definition}

The Isolation Lemma states that a random weight assignment (polynomially bounded) is isolating with a
good probability. Other parts of the algorithm in \cite{MVV87} are deterministic. 
They showed that if 
we are given an isolating weight assignment 
(with polynomially bounded weights) for a graph $G$ then
a perfect matching in $G$ can be constructed in $\NC^2$.
Later, Allender et al.\ \cite{ARZ99} showed that the $\decisionPM$ is in $\SPL$, 
if an isolating weight assignment can be constructed in $\L$ (see also \cite{DKR10}).
A language $L$ is in class $\SPL$ if its characteristic function $\chi_L \colon \Sigma^* \to \{0,1\}$ can 
be (log-space) reduced to computing determinant of an integer matrix. 

Derandomizing the Isolation Lemma remains a challenging open question. 
It has been derandomized for some special classes of graphs: planar bipartite graphs \cite{DKR10,TV12},
 constant genus bipartite graphs \cite{DKTV12}, 
graphs with small number of matchings \cite{GK87,AHT07} and 
 graphs with small number of nice cycles \cite{Hoa10}.
A graph $G$ is bipartite if its vertex set can be partitioned into two parts $V_1, V_2$
such that any edge is only between a vertex in $V_1$ and a vertex in $V_2$. 
A graph is planar if it can be drawn on a plane without any edge crossings. 

We make a further step towards the derandomization of Isolation Lemma. 
We derandomize it for $K_{3,3}$-free bipartite graphs and $K_{5}$-free bipartite graphs. 
These classes are generalizations of planar bipartite graphs. 
For a graph $H$, $G$ is an $H$-free graph if $H$ is not a minor of $G$.
$K_{3,3}$ is the complete bipartite graph with $(3,3)$ nodes and $K_5$ is the complete
graph with $5$ nodes.
A planar graph is simultaneously $K_{3,3}$-free and $K_5$-free.

\begin{theorem}
For a $K_{3,3}$-free or $K_5$-free bipartite graph, an isolating weight
assignment (polynomially bounded) can be constructed in log-space.
\end{theorem}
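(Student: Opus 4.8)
\emph{Reduction to circulations.} The plan is to reduce the construction of an isolating weight assignment to the construction of a weight assignment with no vanishing ``circulations''. Fix a $2$-colouring of the bipartite graph $G$; for an (even) cycle $C$ with a fixed traversal, let $c_w(C)$ be the difference of the $w$-weights of the two perfect matchings of $C$. I would first observe that it suffices to produce, in log-space, a polynomially bounded $w\colon E\to\N$ with $c_w(C)\neq 0$ for every cycle $C$ of $G$: if $M_1\neq M_2$ were two minimum-weight perfect matchings, then picking any cycle $C\subseteq M_1\oplus M_2$ and flipping $C$ inside whichever of $M_1,M_2$ strictly decreases the weight (one of the two always does, since $c_w(C)\neq 0$) contradicts minimality, so such a $w$ is isolating. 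In a bipartite graph all cycles are even, so no further restriction to ``nice'' cycles is needed, and the $\SPL$ consequence then follows via \cite{ARZ99}.

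\emph{Planar base case and decomposition.} For planar bipartite graphs this is known \cite{DKR10,TV12}: fix a planar embedding, use the face-to-edge lemma (Lemma~\ref{lem:faceToEdge}) to compute in log-space a polynomially bounded $w$ under which every bounded face has circulation exactly $1$, whereupon the face--circulation lemma (Lemma~\ref{lem:circulationFaces}) gives that the circulation of any cycle equals the number of faces it encloses, which is nonzero. The feature to carry over is that inside a planar piece these circulations are not merely nonzero but have a consistent sign relative to the embedding. To lift this, I would invoke Wagner's structure theorems: a $K_{3,3}$-free graph is built from planar graphs and copies of $K_5$ by clique-sums of order at most $2$, and a $K_5$-free graph is built from planar graphs and copies of the Wagner graph $V_8$ by clique-sums of order at most $3$. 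Take the decomposition of $G$ into $3$-connected components as a tree $\T$ whose nodes are the pieces (each a planar graph or one fixed bounded-size non-planar graph) and whose edges carry the separating cliques together with their virtual edges; for these graph classes this decomposition is computable in log-space (imported from the reachability results for $K_{3,3}$-free and $K_5$-free graphs rather than reproved).

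\emph{Per-piece assignments and gluing.} To each planar piece I attach the weights of the planar base case for a fixed embedding of that piece, and to the constant-size non-planar atom I attach a fixed weight vector, found by exhaustive search, that makes all of its cycle circulations nonzero and bounded by a constant; whenever a virtual edge lies in two incident pieces I make the two pieces agree on its weight (the planar construction and the finite search both have enough freedom to honour one prescribed edge). The point of this is that any cycle $C$ of $G$ can then be cut along the separating cliques into cycles $C_i$, one per piece visited, closed up with virtual clique edges in such a way that the virtual-edge contributions cancel in pairs, yielding $c_w(C)=\sum_i c_w(C_i)$ with each $c_w(C_i)$ the circulation of $C_i$ within its own piece.

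\emph{The main obstacle.} The identity $c_w(C)=\sum_i c_w(C_i)$ only forces $c_w(C)\neq 0$ if the summands do not cancel, and a single cycle can cross linearly many pieces, so the crude fix of scaling the pieces by increasing powers along $\T$ would blow the weights up to exponential size — losing both the polynomial bound needed for $\SPL$ and log-space computability. Resolving this is the heart of the proof. The leverage I would use is that (a) $2$-clique-sums of planar graphs are again planar, so a chain of planar pieces behaves like a single embedded graph and the signs of its local circulations can be made globally coherent, and (b) the only genuinely non-planar atom is a single fixed graph ($K_5$, resp.\ $V_8$) of constant size, whose occurrences interact in a limited way and can be handled by a fixed choice plus a modest, polynomially bounded correction, in the spirit of the bounded number of ``base'' handle-cycles used in the constant-genus derandomization \cite{DKTV12}. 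Making this recombination simultaneously cancellation-free, polynomially bounded, and log-space is where the real work lies; I expect the $K_5$-free case to be the harder one, since there $3$-clique-sums can already turn a union of planar pieces non-planar, so the sign-coherence in (a) needs extra care.
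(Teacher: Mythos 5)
Your setup matches the paper's: the reduction from isolation to nonzero circulations of all cycles (the paper cites \cite{TV12} for this), the Wagner/Asano clique-sum decompositions, and the splitting of a cycle into per-component projections whose circulations add up (with virtual edges carrying weight zero, rather than your paired-cancellation variant). You have also correctly located the crux. But the crux is exactly where your proposal stops: the two ideas you offer for it do not work as stated, and the paper's actual resolution is absent. First, ``merge chains of planar pieces into one embedded graph and make signs coherent'' does not help, because in these classes the planar pieces are interleaved with the constant-size non-planar atoms ($K_5$, resp.\ $V_8$) throughout the component tree, so there is no long planar chain to collapse; and the atoms cannot be handled by ``a fixed choice plus a modest correction,'' since an atom's contribution must dominate the (a priori unbounded) total contribution of everything hanging below it, which is a global, not local, constraint. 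Second, you correctly note that scaling along the tree depth is exponential in general, but you do not supply the mechanism that avoids it.

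The paper's mechanism has two parts you would need. (1) It rebalances the component tree into an $O(\log n)$-depth \emph{working tree} with the same nodes, built by recursively removing centers; the nontrivial point (Lemma~\ref{lem:workingTree}) is doing this recursion in log-space by keeping components in the form $S_{v,v'}$. Observation~\ref{obv:subtree} then guarantees that the set of components a cycle projects onto --- a subtree of the component tree --- has a unique minimum-level node in the working tree of which all the others are descendants, so ``dominate your descendants'' only has to be enforced across $O(\log n)$ levels. (2) Within a planar component one does \emph{not} rescale all edges: only the at most three faces adjacent to each virtual edge/triangle where a subtree $T_i$ attaches receive circulation $2U_{T_i}$, where $U_{T_i}$ is an inductively maintained upper bound on the whole subtree's contribution. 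Since any projection that exits into $T_i$ must enclose one of those faces, and all faces contribute with one sign (Lemma~\ref{lem:circulationFaces}), the dominant component's contribution is nonzero and beats everything below, while the total weight grows by only a constant factor $K$ per level, giving $K^{O(\log n)}=n^{O(1)}$. You would also need the matching-preserving vertex-split normalizations (each separating set shared by exactly two components, each vertex in at most one separating set, virtual triangles forming faces) that make the projection argument and the face assignments legitimate. Without these ingredients the proposal is a correct plan with the central step unproved.
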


This theorem together with the results of Allender et al.\ \cite{ARZ99} and
Datta et al.\ \cite{DKR10} gives us the following
results about matching.

\begin{corollary}
For a $K_{3,3}$-free or $K_5$-free bipartite graph,
\begin{itemize}[noitemsep]
\vspace{-5pt}
\item $\decisionPM$ is in $\SPL$.
\item $\searchPM$ is in $\FL^{\SPL}$.
\item $\minwtPM$ is in $\FL^{\SPL}$. 
\end{itemize}
\end{corollary}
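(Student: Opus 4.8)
The plan is to reduce the construction of an isolating weight assignment to a statement purely about cycles, to solve that statement on the ``building blocks'' of the graph, and to glue the pieces together.

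\emph{Reduction to circulations.} Fix a bipartition $(A,B)$ of $G$ and orient every edge from $A$ to $B$. For an even cycle $C$ and a weight assignment $w$, let $\operatorname{circ}_w(C)$ be the alternating sum of the edge weights around $C$ (well defined up to sign). A standard symmetric-difference argument shows that $w$ is isolating if and only if there is no perfect matching $M$ and no $M$-alternating cycle $C$ with $\operatorname{circ}_w(C)=0$: if $M_1,M_2$ are two minimum-weight perfect matchings, then each cycle of $M_1\triangle M_2$ can be used to flip from one matching to the other, which forces its circulation to be $0$. It therefore suffices to construct, in log-space, a polynomially bounded $w$ for which \emph{every} even cycle of $G$ has nonzero circulation. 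For a planar bipartite graph this is exactly what the known construction of Datta et al.~\cite{DKR10} (see also \cite{TV12}) achieves: it forces the circulation of every bounded face to equal $1$, and since the circulation of a cycle is the sum of the circulations of the faces it encloses, every cycle receives circulation at least $1$.

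\emph{Decomposition into pieces.} By Wagner's theorems, every $3$-connected minor of a $K_{3,3}$-free graph is planar or $K_5$, and every $3$-connected minor of a $K_5$-free graph is planar or the Wagner graph $V_8$. I would compute the triconnected-component (SPQR) decomposition of $G$ in log-space; its pieces are joined in a tree along $2$-element adhesion sets (via virtual edges), and each piece is a cycle, a multi-edge, a planar $3$-connected graph, or a copy of $K_5$ / $V_8$. A piece need not be bipartite because of virtual edges, but this is harmless: we only care about circulations of even cycles, and virtual edges can be subdivided so that the relevant cycles stay even. For each planar piece I run the construction of the previous step; for each $K_5$- or $V_8$-piece, which has only $O(1)$ cycles, I find a good weight assignment by brute-force search over weights bounded by a constant. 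The decomposition, planarity testing and embedding, the planar construction, and the brute force are all log-space computable, and log-space maps compose.

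\emph{Gluing the weight assignments --- the crux.} At this point there is, for every piece $H_i$, a polynomially bounded $w_i$ under which every relevant cycle of $H_i$ has nonzero circulation, and these must be combined into a single weight assignment on $G$. The structural point is that a cycle $C$ of $G$ meets the pieces along a \emph{connected subtree} of the decomposition tree, and within each piece it induces a union of paths between adhesion vertices that closes up, through the virtual edges, into a cycle of that piece; hence $\operatorname{circ}_w(C)$ is assembled, with signs, from the (nonzero) per-piece circulations. To keep these from cancelling, I would orient the embeddings of the planar pieces consistently --- possible because the decomposition is a \emph{tree}, so there is no global obstruction --- and give virtual edges weight $0$, so that within each piece the contribution becomes a count of enclosed faces and is thus sign-definite; since $C$ is nontrivial in at least one piece, the total circulation is nonzero. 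The delicate part, and the main obstacle, is to carry this out while keeping the weights polynomially bounded: the decomposition tree may have linear depth, so any scheme that multiplies weights by a growing factor as one descends the tree would make the weights exponential. Avoiding this --- arranging the per-piece contributions to add without cancellation \emph{without} any deep scaling, using only the positivity coming from the ``face $=1$'' property together with the consistent orientation --- is where the real work lies. Once $w$ is built, the theorem follows, and the corollary follows by combining it with the results of Allender et al.~\cite{ARZ99} and Datta et al.~\cite{DKR10}.
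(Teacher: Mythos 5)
The final deduction you make---from a log-space-constructible, polynomially bounded isolating weight assignment to $\decisionPM\in\SPL$ and the two $\FL^{\SPL}$ claims via Allender et al.\ and Datta et al.---is exactly what the paper does for this corollary, and your reduction from isolation to nonzero circulations (via the symmetric difference of two minimum-weight perfect matchings) is the paper's Lemma 2. The gaps are in your sketch of the underlying theorem. First, your decomposition for the $K_5$-free case is wrong: it is not true that every $3$-connected $K_5$-free graph is planar or $V_8$. Wagner's characterization uses $3$-clique-sums, so a $3$-connected $K_5$-free graph can itself be a $3$-sum of planar pieces and be non-planar; the SPQR tree with $2$-element adhesion sets does not reach the planar/$V_8$ base case, and one must further split along separating triplets (the paper uses the decomposition of Straub--Thierauf--Wagner, where virtual \emph{triangles} appear and are arranged to be faces of the planar pieces).

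Second, and more seriously, the gluing step you flag as ``where the real work lies'' is not solvable by the means you propose. Consistently orienting the planar embeddings does not make the per-piece contributions sign-definite: the projection of a cycle $C$ onto a piece is traversed in a direction dictated by $C$, and whether that closed-up projection winds clockwise or counterclockwise in the piece's embedding depends on which side of the virtual edge the projection lies---something no choice of embeddings can control. So contributions from different pieces can cancel, and a scheme with no scaling across the tree cannot work. The paper's resolution is precisely a scaling scheme made affordable: it rebalances the component tree into an $O(\log n)$-depth \emph{working tree} (constructible in log-space by a careful center-finding recursion), and at each node scales only the faces adjacent to each virtual edge/triangle by roughly twice the total weight of the subtree hanging there, so that the minimum-level component's contribution strictly dominates, while the total weight grows by only a constant factor per level and hence stays polynomial. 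Without this (or an equivalent) idea, your construction either admits cancellation or blows up to exponential weights on a linear-depth decomposition tree, so the main theorem---and with it this corollary---is not yet established by your argument.
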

Here, $\FL^{\SPL}$ refers to a log-space transducer with access to an $\SPL$ oracle.
The problem $\minwtPM$ asks to construct the minimum weight perfect matching
in a given graph with polynomially bounded weights on its edges. 

For $K_{3,3}$-free bipartite and $K_5$-free bipartite graphs,
an $\NC$ algorithm for $\searchPM$ was known. This is implied by combining
two results: (i) $\countPM$ (counting the number of perfect matchings)
 is in $\NC$ for $K_{3,3}$-free graphs \cite{Vaz89} and $K_5$-free graphs \cite{STW14}
(ii) $\searchPM$ $\NC$-reduces to $\countPM$ for bipartite graphs \cite{KMV08}.
The limitation of this idea is that $\countPM$ is $\# \P$-hard for general bipartite graphs.
Thus, there is no hope of generalizing this approach to work for all graphs. 
While, our ideas can potentially lead to a solution for general/bipartite graphs.

After our work, small genus bipartite graphs is the only remaining class of 
bipartite graphs for which $\countPM$ is in $\NC$ \cite{GL99, KMV08},
but construction of an isolating weight assignment is not known.

\paragraph{Main Idea:} 
We start with the idea of Datta et al.\ \cite{DKR10} which showed that 
nonzero circulation (weight in a fixed orientation) 
for every nice cycle implies isolation of a perfect matching.
To achieve nonzero circulation in a $K_{3,3}$-free or $K_5$-free graph, 
we work with its $3$-connected or $4$-connected component decomposition given by \cite{Wag37, Asa85}
(can be constructed in log-space \cite{TW14,STW14}). 
The components are either planar or constant-sized.
These components form a tree structure, when components are viewed as a node and there is an
edge between two components if they share a separating pair/triplet. 
For any cycle $C$, we break it into its fragments contained within each of these components,
which we call {\em projections} of $C$. 
These projections themselves are cycles.

Circulation of any cycle can be seen as a sum of circulations of its projections.
The components, where a cycle has a non-empty projection, form a subtree of the component tree.
The idea is to assign weights such that there is `central' node in this subtree 
which gets a weight higher than the total weight coming from other nodes in the subtree.
Weights within a component are given by modifying the
 already known techniques for planar graphs \cite{DKR10, Kor09, TV12} and constant sized graphs.

This idea would work only if the component tree has a small depth, which might not be true in general.
Thus, we create an $O(\log n)$-depth working tree, which has the same nodes as the component tree
but the edge relations are different. The working tree `preserves' the subtree structure in some sense.
This working tree can be constructed using the standard recursive procedure for finding a set of centers.
But, a log-space implentation needed a non-trivial idea (Section~\ref{sec:workingTree}).

As there are $O(\log n)$ levels, we need to ensure that at every level the total weight
gets multiplied by only a constant. 
Thus, in a planar component, every edge cannot be assigned a weight on a higher scale. 
Instead, we choose only those edges which surround a separating pair/triplet, 
and scale their weight by the total weight coming from the subtree attached at that separating pair/triplet.

Achieving non-zero circulation in log-space also puts directed rechability in $\UL$ \cite{RA00,BTV09,TV12}.
Thus, we get an alternate proof for the result -- 
directed reachability for $K_{3,3}$-free and $K_5$-free graphs is in $\UL$ \cite{TW14}. 

In Section~\ref{sec:prelim}, we introduce the concepts of nonzero circulation, 
clique-sum, graph decomposition and the corresponding component tree. 
In Section~\ref{sec:nonzeroCirc}, we give a logspace constrcution of a weight assignment
with nonzero circulation for every cycle, for a class of graphs defined via clique-sum operations
on planar and constant-sized graphs. 
In Section~\ref{sec:reductions}, we argue that $K_{3,3}$-free and $K_5$-free graphs fall into this class.  

\section{Preliminaries}
\label{sec:prelim}

Let us first define a skew-symmetric weight function on the edges of a graph. 
For this, we consider the edges of the graph directed in both the directions.
We call this directed set of edges $\vec{E}$. 
A weight function $w \colon \vec{E} \to \Z$ is called skew-symmetric if for 
any edge $(u,v)$, $w(u,v) = -w(v,u)$. 
\begin{definition}
For a cycle $C$, whose edges are given by $\{(v_1, v_2), (v_2,v_3), \dots, (v_{k-1},v_k), (v_k,v_1)\}$,
 its circulation
is defined to be $w(v_1,v_2) + w(v_2,v_3) + \dotsm + w(v_k, v_1)$.
\end{definition}
Clearly, as our weight function is skew-symmetric, changing the orientation of the cycle,
only changes the sign of the circulation. 
The following lemma \cite[Theorem 6]{TV12} gives the connection between nonzero circulations and isolation 
of a matching. For a bipartite (undirected) graph $G(V_1,V_2,E)$, 
a skew-symmetric weight function $w \colon \vec{E} \to \Z$ on its edges,
has a natural interpretation on the undirected edges 
as $\mathrm{w} \colon E \to \Z$ such that $\mathrm{w}(u,v) = w(u,v)$, where $ u \in V_1$ and $v\in V_2$.

\begin{lemma}[\cite{TV12}]
Let $w \colon \vec{E} \to \Z$ is skew-symmetric weight function on the edges of a bipartite
graph $G$ such that every cycle has a non-zero circulation. 
Then, $\mathrm{w} \colon E \to \Z$ is an isolating weight assignment for $G$.
\end{lemma}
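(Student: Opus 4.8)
The plan is to argue by contradiction: suppose $\mathrm{w}$ is not isolating, so $G$ has two distinct minimum-weight perfect matchings $M_1$ and $M_2$ of the same weight. First I would look at the symmetric difference $M_1 \oplus M_2$. Since $M_1$ and $M_2$ are both perfect matchings, every vertex has degree $0$ or $2$ in $M_1 \oplus M_2$, so it decomposes into a disjoint union of cycles, each of which alternates between edges of $M_1$ and edges of $M_2$ (an \emph{alternating cycle}). Because $M_1 \ne M_2$, at least one such cycle $C$ is non-empty; and since $G$ is bipartite, $C$ has even length, say $2k$, with $k$ edges from each of $M_1$ and $M_2$.

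The key step is to relate the equality of weights $\mathrm{w}(M_1) = \mathrm{w}(M_2)$ to the circulation of $C$. Orient $C$ consistently as $v_1 \to v_2 \to \dots \to v_{2k} \to v_1$, and set it up so that the edges leaving the $V_1$-side — i.e. the edges $(v_{2i-1}, v_{2i})$ — are precisely the $M_1$-edges, while the edges $(v_{2i}, v_{2i+1})$ (going back from $V_2$ to $V_1$) are the $M_2$-edges. Using the natural identification $\mathrm{w}(u,v) = w(u,v)$ for $u \in V_1$, $v \in V_2$, the $M_1$-edges contribute $+\mathrm{w}$ to the circulation and, by skew-symmetry, the $M_2$-edges contribute $-\mathrm{w}$. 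Hence the circulation of $C$ equals $\sum_{M_1\text{-edges in }C}\mathrm{w}(e) \;-\; \sum_{M_2\text{-edges in }C}\mathrm{w}(e)$. Now exchanging the $M_1$-edges of $C$ for the $M_2$-edges of $C$ turns $M_1$ into another perfect matching $M_1'$ (this is the standard alternating-cycle swap) whose weight is $\mathrm{w}(M_1)$ minus exactly that circulation. Since $\mathrm{w}(M_1) = \mathrm{w}(M_2)$ and both are \emph{minimum}, and similarly swapping out of $M_2$ only increases weight, one gets that the circulation of $C$ must be $0$, contradicting the hypothesis that every cycle has non-zero circulation. (One should note $C$ is a genuine cycle in $G$, so the hypothesis applies to it.)

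I expect the main obstacle to be purely bookkeeping: carefully fixing the orientation of $C$ so that the alternation of $M_1$/$M_2$ edges lines up with the $V_1 \to V_2$ versus $V_2 \to V_1$ directions, and then correctly tracking the signs introduced by skew-symmetry so that ``circulation of $C$'' literally equals ``$\mathrm{w}(M_1 \cap C) - \mathrm{w}(M_2 \cap C)$.'' Once that identity is in place, the contradiction is immediate from minimality (one does not even need both matchings to be minimum — it suffices that one of them, say $M_2$, is of minimum weight and $M_1$ is a distinct minimum-weight matching, then the swap $M_2 \to M_2'$ along $C$ gives weight $\mathrm{w}(M_2)$ minus the circulation, which cannot be smaller, while the reverse swap $M_1 \to M_1'$ cannot be smaller either, forcing the circulation to vanish). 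A secondary point worth a sentence is that it suffices to treat a single cycle of the symmetric difference rather than all of them simultaneously, since a non-zero circulation on any one component already yields the contradiction.
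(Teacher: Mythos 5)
Your argument is correct and is essentially the standard proof of this lemma (the paper itself only cites \cite{TV12} for it and gives no proof): decompose $M_1\oplus M_2$ into alternating cycles, observe that bipartiteness plus skew-symmetry makes the circulation of such a cycle equal to $\mathrm{w}(M_1\cap C)-\mathrm{w}(M_2\cap C)$, and use minimality of both matchings under the two swaps $M_i\oplus C$ to force that circulation to be zero. The sign bookkeeping you flag works out exactly as you describe, since along an alternating cycle the $M_1$-edges are all traversed from $V_1$ to $V_2$ and the $M_2$-edges from $V_2$ to $V_1$.
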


The bipartiteness assumption is needed only in the above lemma.
We will construct a skew-symmetric weight function that 
guarantees nonzero circulation for every cycle,
for $K_{3,3}$-free and $K_5$-free graphs, i.e.\ without assuming bipartiteness.


\subsection{Clique-Sum}
We will show construction of a nonzero circulation weight assignment for a special class of graphs,
defined via a graph operation called {\em clique-sum}. 

\begin{definition}[Clique-Sum]
Let $G_1$ and $G_2$ be two graphs each containing a clique (of same size).
A clique-sum of graphs $G_1$ and $G_2$ is obtained 
from their disjoint union by identifying pairs of vertices in these two cliques
to form a single shared clique,
and by possibly deleting some of the edges in the clique. 
It is called a $k$-clique-sum if the cliques involved have at most $k$ vertices.
\end{definition}

One can form clique-sums of more than two graphs by a repeated application of 
clique-sum operation on two graphs (see Figure~\ref{fig:cliqueSum} in Appendix~\ref{sec:appendix}).
Using this, we define a new class of graphs. 

\begin{figure}
\centering
  \begin{picture}(0,0)%
  \includegraphics{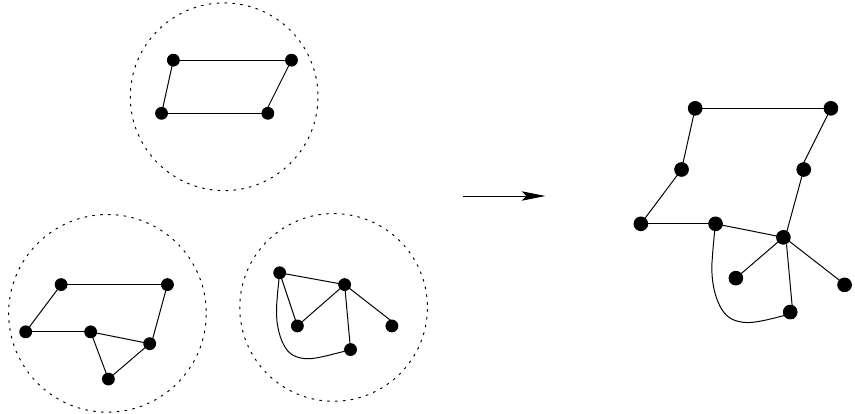}%
  \end{picture}%
  \input{cliqueSum.figtex}%

\caption{Graph $G$ obtained by taking (i) $2$-clique-sum of $G_1$ and $G_2$ by identifying
$\chev{u_1,v_1}$ with $\chev{u_2,v_2}$ and (ii)
$3$-clique-sum of the resulting graph with $G_3$ by identifying $\chev{a_2,b_2,c_2}$ with $\chev{a_3,b_3,c_3}$.
}
\label{fig:cliqueSum}
\end{figure}

Let $\P_c$ be the class of all planar graphs together with all graphs of size at most $c$, 
where $c$ is a constant. 
Define $\chev{\P_c}_k$ to be the class of graphs constructed by repeatedly taking 
$k$-clique-sums, starting from the graphs which belong to the class $\P_c$.
The starting graphs are called the component graphs.
We will construct a nonzero circulation weight assignment for the graphs which belong 
to the class $\Gt$.

Taking $1$-clique-sum of two graphs will result in a graph which is not biconnected.
As we are interested in perfect matchings, we only deal with biconnected graphs 
(see Section~\ref{sec:biconnected}).
Thus, we assume that every clique-sum operation
involves either $2$-cliques or $3$-cliques.
The $2$-cliques with respect to which we take cliques-sums are called separating pairs
and $3$-cliques are called separating triplets,
as their deletion will make the graph disconnected.
In general, they are called separating sets.
Usually, a separating pair/triplet means any pair/triplet of vertices,
whose deletion will make the graph disconnected. 
But, in this section, a separating pair/triplet will only mean those pairs/triplets 
which are used in a clique-sum operation. 

\subsection{Component Tree}
\label{sec:componentTree}
In general, clique-sum operation can be performed many times
using the same separating set. In other words, 
many components can share a separating set.
In Section~\ref{sec:reductions}, we show that any graph in $\Gt$ can be modified via some 
matching preserving operations such that on decomposition, 
any separating set is shared by only two components.
Henceforth, in this section we assume this property. 

Using this assumption, we can define a component graph for any graph $G \in \Gt$
as follows: each component is represented by a node and two such nodes are
connected by an edge if the corresponding components share a separating set. 
Observe that this component graph is actually a tree. This is because
when we take repeated clique-sums, a new component can be attached with only one of
the already existing components, as a clique will be contained within one component.
In literature \cite{HT73, TW14}, the component tree also contains a node for each separating set 
and it is connected by all the components which share this separating set. 
But, here we can ignore this node as we have only two sharers for each separating set. 

In the component tree, each component is shown with all the separating sets
it shares with other components. Thus, a copy of a separating set is present in
both its sharer components. Moreover, in each component, a separating set
is shown with a virtual clique, i.e.\ a virtual edge for a separating pair
and a virtual triangle for a separating triplet. 
These virtual cliques represent the paths between the nodes via other components
(see Figure~\ref{fig:componentTree}).
\begin{figure}
\centering
  \begin{picture}(0,0)%
  \includegraphics{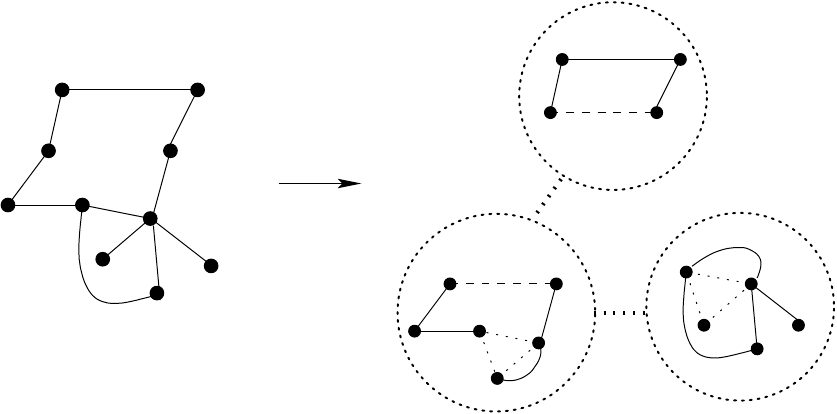}%
  \end{picture}%
  \input{componentTree.figtex}%

\caption{A graph $G \in \Gt$ is shown with its component tree. Dotted circles
show the nodes and dotted lines connecting them show the edges of the component tree. 
Dashed lines represent virtual edges and dotted triangles represent the virtual triangles, in the components. 
}
\label{fig:componentTree}
\end{figure}
If any two vertices in a separating set have a real edge in $G$,
then that real edge is drawn in one of the sharing components, parallel to the virtual edge. 
Note that while a vertex can have its copy in two components, 
any real edge is present in exactly one component. 

In literature \cite{HT73, TW14}, 
for any real edge in a separating set,
the component tree contains a new node called ``$3$-bond'' (a real edge with two parallel virtual edges).
But, here we do not have this node and represent the real
edge as mentioned above. 

\section{Nonzero Circulation}
\label{sec:nonzeroCirc}
In this section, we construct a nonzero circulation weight assignment for
a given graph in the class $\Gt$, provided that the component tree and 
the planar embeddings of the planar components are given. 
Moreover, to construct this weight assignment we will make some 
assumptions about the given graph and its component tree. 

\begin{enumerate}[noitemsep]
\vspace{-5pt}
\item In any component, a vertex is a part of at most one separating set.
\item Each separating set is shared by at most two components. 
\item Any virtual triangle in a planar component is always a face (in the given planar embedding).
\end{enumerate}

In Section~\ref{sec:reductions} we show how to construct a component tree for a
 given $K_{3,3}$-free or $K_5$-free graph and then to modify it
to have these properties.
The third property comes naturally, as the inside and outside parts of
 any virtual triangle can be considered as different components sharing this separating 
triplet. 
All these constructions are in log-space.  
Let, in any non-planar component, the number of real edges is bounded by $m$.
In Section~\ref{sec:reductions} we show that this bound is $60$, for a $K_{3,3}$-free or $K_5$-free graph.


\subsection{Components of a cycle}
We look at a cycle in the graph as {\em sum} of many cycles, one from each component 
the cycle passes through. Intuitively, the original cycle is {\em broken} at the separating set vertices which were part of the cycle, thereby generating fragments of the cycle in various nodes of the component tree. In all nodes containing these fragments, we include the virtual edges of the separating sets in question to complete the fragment into a cycle, thus resulting in component cycles in the nodes of the tree (see Figure~\ref{fig:cycleComponents}).

Consider a directed cycle $C = \{(v_0, v_1), (v_1, v_2), \dots, (v_{k-1}, v_0)\}$ in a graph $G = (V, E)$.
Without loss of generality, consider that $G$ is separated into two components $G_1$ and $G_2$ via a separating pair $(v_i, v_0)$ or a separating triplet $(v_i, v_0, u)$, where $1\le i<k$ and $u, v_0, \dots, v_k \in V$.
Then, one of the components, say $G_1$, will contain the vertices $v_i, v_{i+1 \bmod k}, \dots, v_{k-1}, v_0$, and the other ($G_2$) will contain the vertices $v_0, v_{1}, \dots, v_{i-1}, v_i$.
Then the cycles $C_1 = \{(v_i, v_{i+1 \bmod k}), \allowbreak\dots, \allowbreak(v_{k-1}, v_0), \allowbreak(v_0, v_i)\}$ and $C_2 = \{(v_0, v_{1 }), \allowbreak\dots, \allowbreak(v_{i-1}, v_i), \allowbreak(v_i, v_0)\}$ in $G_1$ and $G_2$ respectively are the component cycles of $C$, and we say that $C$ is the sum of $C_1$ and $C_2$.
Observe that the edges $(v_i, v_0)$ and $(v_0, v_i)$ are virtual.

Repeat the processes recursively for $C_1$ and $C_2$ until no separating set breaks a cycle component, and we get the component cycles of the cycle $C$. 
Note that any edge in $C$ is contained in one and only one of the component cycles, and for any component cycle, all its edges, other than the virtual edges, are contained in $C$.
\begin{figure}
\centering
  \begin{picture}(0,0)%
  \includegraphics{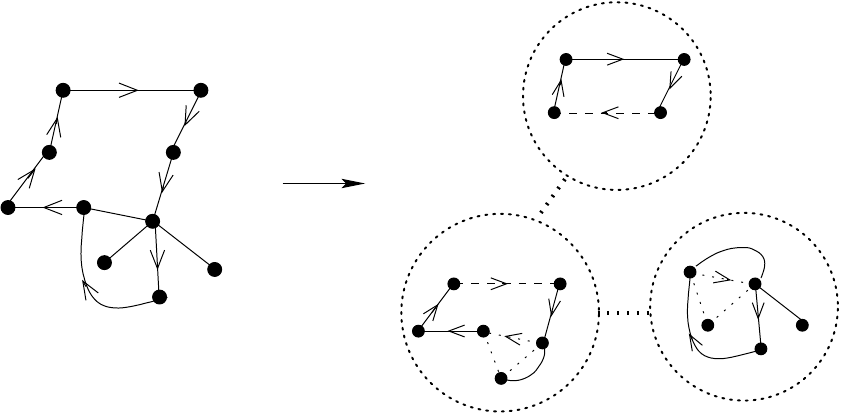}%
  \end{picture}%
  \input{cycleComponents.figtex}%

\caption{Breaking a cycle into its component cycles (projections) in the component tree. Notice that the original cycle and its components share the same set of {\em real} edges.
}
\label{fig:cycleComponents}
\end{figure}


Observe that for any separating set in a component, a cycle can use
one of its vertices to go out of the component and another vertex to come in
(this transition is represented by a virtual edge in the component). 
As any separting set has size at most $3$, a cycle can visit a node of the component tree only once. 
In other words, a cycle can have only one component cycle in any component tree node
(this would not be true if we had separating sets of size $4$). 
Also, a component cycle can take only one edge of any virtual triangle.

\begin{definition}[Projection of a cycle]
For a given component node $N$ in the component tree, the component cycle of a cycle $C$ in $N$ is called the projection of $C$ on $N$. If there is no component cycle of $C$ in $N$, then $C$ is said to have an empty projection on $N$.

\end{definition}
Within any component, weight of a virtual edge will always be set to zero. This is ensured by our weight function described in Section~\ref{sec:weightingScheme}.
Hence, the following lemma,
\begin{lemma}
The circulation of a cycle is the sum of the circulations of its component cycles.
\label{lem:breakCycle}
\end{lemma}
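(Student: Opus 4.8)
The plan is to proceed by induction on the number of clique-sum operations (equivalently, the number of separating sets) that are ``cut'' when decomposing $C$ into its component cycles, using the single-cut case established in the text as the base step. Recall that the decomposition of $C$ was defined recursively: if $G$ is a $2$- or $3$-clique-sum of $G_1$ and $G_2$ along a separating set $S$, and $C$ meets both sides, then $C$ is the sum of $C_1$ and $C_2$, where $C_i$ is the fragment of $C$ inside $G_i$ completed by a virtual edge of $S$. Since the weight function (Section~\ref{sec:weightingScheme}) assigns weight $0$ to every virtual edge, and since $w$ is skew-symmetric so that the two oppositely-oriented virtual edges $(v_i,v_0)$ and $(v_0,v_i)$ contribute $w(v_i,v_0)+w(v_0,v_i)=0$ in any case, the circulation is additive across a single cut: $\mathrm{circ}(C)=\mathrm{circ}(C_1)+\mathrm{circ}(C_2)$. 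This is the base case.

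For the inductive step, I would observe that each of $C_1$ and $C_2$ is itself a cycle in a graph ($G_1$ or $G_2$) that is obtained by strictly fewer clique-sum operations, so by the induction hypothesis its circulation equals the sum of the circulations of \emph{its} component cycles. It then remains only to check that the component cycles of $C_1$ together with the component cycles of $C_2$ are exactly the component cycles of $C$ — i.e.\ that the recursive decomposition is ``associative'' and does not depend on the order in which separating sets are processed. This follows from the remark already made in the text: a cycle can visit any component-tree node at most once (separating sets have size $\le 3$), so each component cycle of $C$ is pinned to a unique node $N$, and it is determined by the set of real edges of $C$ lying in $N$ together with the virtual edges needed to close them up; this description is manifestly independent of the processing order. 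Summing the per-node circulations then gives $\mathrm{circ}(C)=\sum_N \mathrm{circ}(\text{projection of }C\text{ on }N)$.

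I would also spell out the one bookkeeping point that makes the additivity clean: every real edge of $C$ appears in exactly one component cycle (stated in the text), and every edge of a component cycle that is \emph{not} real is virtual and hence contributes $0$ to circulation. Therefore $\sum_N \mathrm{circ}(\text{proj}_N C) = \sum_{e \in C,\ e\text{ real}} w(e) + 0 = \mathrm{circ}(C)$, the last equality because $C$'s own virtual-edge contributions also vanish (a cycle of $G$ may pass through virtual edges only if $G$ itself still carries them, but in that case the same zero-weight argument applies). The main obstacle I anticipate is purely organizational rather than mathematical: making precise the claim that the recursive ``break at a separating set'' procedure yields a well-defined multiset of projections regardless of the order of cuts, so that the induction on the number of cuts is legitimate. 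Once that well-definedness is granted, the arithmetic is immediate from skew-symmetry and the zero weighting of virtual edges.
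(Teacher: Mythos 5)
Your proposal is correct and rests on exactly the same observation the paper uses: virtual edges carry weight zero, every real edge of $C$ lies in exactly one component cycle with its orientation preserved, so the circulations simply add up. The paper treats Lemma~\ref{lem:breakCycle} as immediate from this remark; your induction on the number of cuts and the well-definedness discussion are a more careful write-up of the same argument, not a different route.
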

It is easy to see that for any cycle $C$, the components on which $C$ has a non-empty 
projection, form a subtree of the component tree.


\subsection{Weighting Scheme}
\label{sec:weightingScheme}
The actual weight function we employ is a combination of two weight functions $w_0$ and $w_1$.
They are combined with an appropriate scaling so that they do not interfere with each other. 
$w_1$ assures that all the cycles which are within one component have a non-zero circulation
and $w_0$ ensures that all the cycles which project on at least two components have a non-zero circulation.
We first describe the construction of $w_0$. 

\textbf{Working Tree:} The given component tree can have arbitrary depth,  
while our weight construction would need the tree-depth to be $O(\log n)$.
 Thus, we define a new {\em working tree}. 
It is a rooted tree, which
has the same nodes as the component tree, but the edge relations are different.
The working tree, in some sense, `preserves' the subtree structure of the original tree. 

For a tree $S$, its working tree $\wt(S)$ is constructed as follows: 
Find a `center' node $c(S)$ in the tree $S$ and 
mark it as the root of the working tree, $r(\wt(S))$.
Deleting the node $c(S)$ from the tree $S$, 
would give a set of disjoint trees, say $\{ S_1, S_2, \dots, S_k \}$.
Apply this procedure recursively on these trees to construct 
their working trees $\wt(S_1), \wt(S_2), \dots, \wt(S_k)$.
Connect each $\wt(S_i)$ to the root $r(\wt(S))$, as a subtree.
This completes the construction. 

The `center' nodes are chosen in a way so that the working tree depth is $O(\log n)$.
Section~\ref{sec:workingTree} gives the exact log-space construction of the working tree. 

Note that for any two nodes $v_1 \in S_i$ and $v_2 \in S_j$ such that $i\ne j$, 
$\path(v_1,v_2)$ in $S$ passes through the node $c(S) = r(\wt(S))$.
Thus, we get the following property for the working tree. 
\begin{observation}
\label{obs:path}
For any two nodes $u,v \in S$, let their least common ancestor in 
the working tree $\wt(S)$ be the node $a$. Then $\path(u,v)$ in the tree $S$
passes through $a$.
\end{observation}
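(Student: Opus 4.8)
The plan is to prove this by induction on the structure of the recursive construction of $\wt(S)$. Fix two nodes $u, v \in S$ and let $a$ be their least common ancestor in $\wt(S)$. I would first handle the base case where $S$ is a single node: then $u = v = a$ and $\path(u,v)$ trivially passes through $a$. For the inductive step, consider the center $c(S) = r(\wt(S))$ and the disjoint trees $S_1, \dots, S_k$ obtained by deleting $c(S)$ from $S$.

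There are two cases according to where $u$ and $v$ land. \textbf{Case 1:} $u$ and $v$ lie in different trees $S_i$ and $S_j$ with $i \ne j$ (or one of them equals $c(S)$ itself). Since the $\wt(S_i)$ are hung as separate subtrees directly below $r(\wt(S)) = c(S)$, the only common ancestor of $u$ and $v$ in $\wt(S)$ is $c(S)$, so $a = c(S)$. On the other hand, removing $c(S)$ from $S$ separates $S_i$ from $S_j$, so any path in $S$ between a vertex of $S_i$ and a vertex of $S_j$ must pass through $c(S) = a$; and if one of $u,v$ \emph{is} $c(S)$, the claim is immediate. \textbf{Case 2:} $u$ and $v$ both lie in the same tree $S_i$. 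Then their least common ancestor $a$ in $\wt(S)$ also lies in $\wt(S_i)$ (it cannot be $c(S)$, since $c(S)$ is an ancestor of every node of $\wt(S_i)$ but not their \emph{least} common ancestor when both are strictly inside $\wt(S_i)$), and moreover $a$ is precisely the least common ancestor of $u$ and $v$ in the smaller working tree $\wt(S_i)$. By the induction hypothesis applied to $S_i$, $\path(u,v)$ within the tree $S_i$ passes through $a$. Since $S_i$ is a subtree of $S$ and trees have unique paths, the path between $u$ and $v$ in $S$ coincides with the path in $S_i$, so it passes through $a$ as well.

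The only mildly delicate point — and the one I would write out carefully — is the claim in Case~2 that the least common ancestor of $u$ and $v$ in $\wt(S)$ equals their least common ancestor in $\wt(S_i)$. This follows because $\wt(S_i)$ is attached to $\wt(S)$ as a subtree rooted at a child of $r(\wt(S))$: ancestors in $\wt(S)$ of a node of $\wt(S_i)$ are exactly its ancestors in $\wt(S_i)$ together with $r(\wt(S))$, so taking the \emph{lowest} common ancestor of two nodes both lying in $\wt(S_i)$ gives the same answer in either tree. Everything else is a routine consequence of the fact that deleting $c(S)$ disconnects the $S_i$ from one another in $S$, which is exactly how the $S_i$ were defined. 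I do not anticipate a genuine obstacle here; the statement is essentially a restatement of the recursive construction, and the proof is a clean induction.
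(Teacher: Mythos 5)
Your proof is correct and follows essentially the same route as the paper, which derives the observation from the single fact that a path between nodes in distinct components $S_i$, $S_j$ of $S \setminus c(S)$ must pass through $c(S) = r(\wt(S))$, with the recursion handling the remaining case. Your write-up merely makes explicit the induction and the (correct) point that the least common ancestor of two nodes of $\wt(S_i)$ is the same whether computed in $\wt(S_i)$ or in $\wt(S)$.
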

The root $r(\wt(S))$ of the working tree $\wt(S)$ is said to be at level $1$. 
For any other node in $\wt(S)$, its level is defined to be one more than the level of its parent. 
Henceforth, level of a node will always mean its level in the working tree.
From Observation~\ref{obs:path}, we can easily conclude the following. 
\begin{observation}
\label{obv:subtree}
Let $S'$ be an arbitrary subtree of $S$, with its set of nodes being $\{v_1, v_2, \dots, v_k\}$.
There exists $i^* \in \{1,2, \dots, k \}$ such that for any $j \neq i^*$,
$v_{j}$ is a descendant of $v_{i^*}$ in the working tree $\wt(S)$.
\end{observation}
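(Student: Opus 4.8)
The plan is to derive this directly from Observation~\ref{obs:path} together with a basic fact about trees: among any finite set of nodes in a rooted tree, there is a unique one that is an ancestor of the least common ancestor of every pair drawn from the set — namely, the ``topmost'' node, the one of smallest depth whose subtree contains the whole set, provided that node lies in the set. So first I would let $a$ be the least common ancestor, in $\wt(S)$, of all of $v_1,\dots,v_k$ (equivalently, the node of minimum level among the pairwise least common ancestors), and aim to show $a \in \{v_1,\dots,v_k\}$; setting $i^*$ to be its index then finishes the argument, since by definition of least common ancestor every $v_j$ is a descendant of $a = v_{i^*}$ in $\wt(S)$.

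The key step is showing $a$ itself is one of the $v_j$'s, and this is exactly where the hypothesis that $S'$ is a \emph{subtree} (a connected subgraph) of $S$ is used, via Observation~\ref{obs:path}. Suppose for contradiction $a \notin \{v_1,\dots,v_k\}$. Since $a$ is the least common ancestor of the set but is not in it, the children of $a$ in $\wt(S)$ must split the set: there are at least two distinct children $b_1, b_2$ of $a$ in $\wt(S)$ whose subtrees each contain some $v_j$. Pick $v_p$ in the subtree rooted at $b_1$ and $v_q$ in the subtree rooted at $b_2$. Then the least common ancestor of $v_p$ and $v_q$ in $\wt(S)$ is precisely $a$, so by Observation~\ref{obs:path} the path $\path(v_p, v_q)$ in the tree $S$ passes through $a$. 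But $v_p, v_q \in S'$ and $S'$ is connected in $S$, so the entire path $\path(v_p,v_q)$ lies inside $S'$; in particular $a \in S'$, i.e.\ $a = v_j$ for some $j$, contradicting our assumption.

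The main obstacle — really the only subtlety — is making the ``splitting'' step precise: one must argue carefully that if $a$ is the least common ancestor of the whole set and does not belong to it, then the set genuinely meets two different child-subtrees of $a$. This follows because if the whole set were contained in a single child-subtree rooted at some child $b$ of $a$, then $b$ would be a common ancestor of all the $v_j$ of strictly larger level than $a$, contradicting minimality of $a$ as their least common ancestor. With that observation in hand the rest is routine tree bookkeeping, and the conclusion ``every $v_j$ with $j \neq i^*$ is a descendant of $v_{i^*}$ in $\wt(S)$'' is immediate from $v_{i^*} = a$ being an ancestor of all nodes in the set.
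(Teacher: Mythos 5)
Your proof is correct and follows essentially the same route as the paper's: both arguments rest on Observation~\ref{obs:path} together with the connectedness of $S'$ in $S$, which together force the least common ancestor (in $\wt(S)$) of two nodes of $S'$ to lie in $S'$. The only difference is cosmetic — the paper chooses $v_{i^*}$ directly as a minimum-level node of $S'$, which sidesteps your ``splitting'' step showing that the set meets two distinct child-subtrees of $a$, whereas you identify the global least common ancestor first and then prove it belongs to $S'$.
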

\begin{proof}
Let $l^*$ be the minimum level of any node in $S'$,
and let $v_{i^*}$ be a node in $S'$ with level $l^*$.
We claim that every other node in $S'$ is a descendant of $v_{i^*}$, in the working tree $\wt(S)$. 
For the sake of contradiction, let there be a node $v_j \in S'$, 
which is not a descendant of $v_{i^*}$.
Then, the least common ancestor of $v_j$ and $v_{i^*}$ in $\wt(S)$, must have a level,
strictly smaller than $l^*$. By observation~\ref{obs:path}, this least common
ancestor must be a present in the tree $S'$. But, we assumed $l^*$ is the minimum
level in $S'$. Thus, we get a contradiction. 
\end{proof}

This observation plays a crucial role in our weight assignment construction,
as for any cycle $C$ in the graph $G$, the nodes in the component tree,
where $C$ has a non-empty projection, form a subtree of the component tree. 


Complementary to the level, we also define {\em height} of every node in the working tree. 
Let the maximum level of any node in the working tree be $L$. 
Then, the height of a node is defined to be the difference between its level and $L+1$.

To assign weights in the graph $G$, we work with the working tree of its component
tree. Let the working tree be $\T$.
We start by assigning weight to the nodes having the largest level, 
and move up till we reach level $1$, that is, the root node $r(\T)$.



\paragraph{Circulation of cycles spanning multiple components:}
For any subtree $T$ of the working tree $\T$, the weights to the edges inside the component $r(T)$ will be given by two different schemes depending on whether the corresponding graph is planar or constant sized.

Let the maximum possible number of edges in a constant sized component be $m$. Then, let $K$ be a constant such that $K>\max{(2^{m+2}, 7)}$.
Also, suppose that the height of a node $N$ is given by the function $h(N)$, and the number of leaves in subtree $T$ is given by $l(T)$.
Lastly, suppose the set of subtrees attached at $r(T)$ is $\{ T_1,T_2, \dots, T_k \}$.

{\bf Constant sized graph:} 
Let the set of (real) edges of the graph is $\{ e_1, e_2,\allowbreak \dots, e_m \}$. 
The edge $e_j$ will be given weight $2^j\times K^{h(r(T))-1} \times l(T)$ for an arbitrarily fixed direction.
The intuition behind this scheme is that powers of $2$ ensure that sum of weights for any subset of edges remain nonzero even when they contribute with different signs. 
Later, we prove that for a cycle $C$ fully contained within a subtree $T$ of the working tree,
the upper bound on its circulation is $K^{h(r(T))}\times l(T)$.
 
{\bf Planar graph:} Let us fix a planar embedding of the graph. 
For a given weight assignment $w : \vec{E} \to \Z$ on the edges of the graph, 
we define the {\em circulation of a face} as the circulation of the corresponding
cycle in the clockwise direction i.e.\
traverse the boundary edges of the face in the clockwise direction 
and take the sum of their weights.
Here our weighting scheme will fix circulations for the inner faces of the graph.
Lemma~\ref{lem:faceToEdge} describes how to assign weights to the edges of a planar graph
to get the desired circulation for each of the inner faces.  

\paragraph{Assigning circulations to the faces:}
If $T$ is a singleton, and thus there are no subtrees attached at $T$, we give a zero circulation to all the faces (and thus to all the edges) of $r(T)$.

Otherwise, consider a separating pair $\{a, b\}$ where a subtree $T_i$ is attached to $r(T)$.
The two faces adjacent to the virtual edge $(a,b)$ will be assigned circulation $2\times K^{h(r(T_i))}\times l(T_i)$.
Similarly, consider a triplet $\{a, b, c\}$ where a subtree $T_j$ is attached.
Then all the faces (at most $3$) adjacent to the virtual triangle $\{a,b,c\}$ get circulation $2\times K^{h(r(T_j))}\times l(T_j)$.
Repeat this procedure for faces adjacent to all the pairs and/or triplets where subtrees are attached.
If a face is adjacent to more than one virtual edge/triangle, then
we just take the sum of different circulations due to each virtual edge/triangle.

Here, we mean that each face has a positive circulation in the clockwise direction.
The intuition behind this scheme is the following: circulation of any cycle in the planar component is just the sum of  circulations of the faces inside it.
As, all of them have same sign, they cannot cancel each other.
Moreover, contribution to the circulation from this planar component cannot be canceled by the contribution from any of its subtrees.

Now, we formally show that this weighting scheme ensures that all the cycles spanning multiple components in the tree get non-zero circulation.

\paragraph{Nonzero Circulation of a cycle:}
Firstly, we derive the upper bound $U_T$ on the circulation of any cycle completely contained in a subtree $T$ of the working tree.
\begin{lemma}
\label{lem:upperBound}
The upper bound on the circulation of any cycle contained in a subtree $T$ of the working tree $\T$ is $U_T = K^{h(r(T))}\times l(T)$.
\end{lemma}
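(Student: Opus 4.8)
The plan is to prove Lemma~\ref{lem:upperBound} by induction on the structure of the working tree $\T$, bounding the circulation of an arbitrary cycle $C$ contained in a subtree $T$ by splitting its circulation, via Lemma~\ref{lem:breakCycle}, into the contribution coming from the root component $r(T)$ and the contributions coming from the cycles' projections into the attached subtrees $T_1, \dots, T_k$. Formally, if $C$ has a non-empty projection only on nodes of $T$, then by Observation~\ref{obv:subtree} there is a distinguished node, but what matters here is simply that each projection of $C$ lives either in $r(T)$ or in exactly one of the $T_i$ (a cycle visits each node once, and each $T_i$ is reached through a single separating set). So the circulation of $C$ is the circulation of its projection on $r(T)$ plus the sum, over those $i$ for which $C$ projects non-trivially into $T_i$, of the circulation of the corresponding sub-cycle in $T_i$.

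First I would handle the base case: $T$ a singleton. If $r(T)$ is a constant-sized component, the edge weights are $2^j K^{h(r(T))-1} l(T)$ with $l(T)=1$, so the absolute circulation of any cycle is at most $\sum_{j=1}^{m} 2^j K^{h(r(T))-1} < 2^{m+1} K^{h(r(T))-1} < K \cdot K^{h(r(T))-1} = K^{h(r(T))} = U_T$, using $K > 2^{m+2}$. If $r(T)$ is planar and $T$ is a singleton, all faces get zero circulation, so every cycle has circulation $0 \le U_T$. Next, for the inductive step, I would bound the circulation of the projection of $C$ on $r(T)$ separately for the two component types. For a constant-sized $r(T)$: the edge weights are $2^j K^{h(r(T))-1} l(T)$, so the projection's circulation is at most $\sum_j 2^j K^{h(r(T))-1} l(T) < 2^{m+1} K^{h(r(T))-1} l(T) \le \tfrac{K}{2} K^{h(r(T))-1} l(T)$, i.e.\ at most a $\tfrac{1}{2}K^{h(r(T))} l(T)$ term (I'd track constants carefully so the final sum closes). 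For a planar $r(T)$: the projection's circulation equals the sum of the face circulations of faces enclosed by it; each face's circulation is a sum of terms $2 K^{h(r(T_i))} l(T_i)$ over incident virtual cliques, and since $h(r(T_i)) = h(r(T)) - 1$ for every child and the $l(T_i)$ sum to $l(T)$, the total over all faces is at most $2 K^{h(r(T))-1} l(T)$ times a small constant counting multiplicity (each virtual edge borders $\le 2$ faces, each virtual triangle $\le 3$ faces), again a controlled fraction of $K^{h(r(T))} l(T)$.

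Then I would add the recursive contributions: by the inductive hypothesis each sub-cycle in $T_i$ has circulation at most $U_{T_i} = K^{h(r(T_i))} l(T_i) = K^{h(r(T))-1} l(T_i)$, and summing over $i$ (the $l(T_i)$ are disjoint parts of $l(T)$, so $\sum_i l(T_i) \le l(T)$) gives a total of at most $K^{h(r(T))-1} l(T)$. Combining the root-component bound and the recursive bound, the whole circulation of $C$ is at most (constant $\cdot K^{h(r(T))-1} l(T)$), and by choosing $K$ large enough (the paper's $K > \max(2^{m+2}, 7)$ is calibrated exactly so that $\text{constant} \le K$) this is at most $K^{h(r(T))} l(T) = U_T$, completing the induction. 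The main obstacle is pinning down the exact multiplicative constants in the two root-component cases — in particular making sure the planar face-circulation bound accounts correctly for a single face being adjacent to several virtual cliques, and for a cycle enclosing up to all the faces — and then verifying that $K > 7$ (together with $K > 2^{m+2}$) really is enough for the sum of the root term, the face-multiplicity factor, and the inherited recursive term to stay below $K^{h(r(T))} l(T)$; everything else is bookkeeping on top of Lemma~\ref{lem:breakCycle} and Observation~\ref{obv:subtree}.
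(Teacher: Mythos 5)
Your proposal is correct and follows essentially the same route as the paper: induction on the working tree, splitting a cycle's circulation via Lemma~\ref{lem:breakCycle} into the root-component contribution (bounded by $2^{m+1}K^{h(r(T))-1}l(T)$ in the constant-sized case and by $6K^{h(r(T))-1}l(T)$ in the planar case via the at-most-three faces adjacent to each virtual clique) plus the inductively bounded subtree contributions $\sum_i U_{T_i} = K^{h(r(T))-1}l(T)$, with $K>\max(2^{m+2},7)$ closing the estimate exactly as in the paper. The constants you flag as needing care work out precisely as the paper's do, so no gap remains.
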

\begin{proof}
We prove this using induction on the height of $r(T)$.

\textbf{Base case:}
The base case is when the height of $r(T)$ is $1$. Notice that this means that $r(T)$ has the maximum level amongst all the nodes in $\T$, and therefore, $r(T)$ is a leaf node, and $T$ is a singleton.
Consider the two cases: i)when $r(T)$ is a planar node, and ii)when it is a constant sized node.

By our weight assignment, if $r(T)$ is planar, the total weight of all the edges is zero.
On the other hand, if $r(T)$ is a constant sized graph, the maximum circulation of a cycle is the sum of weights of its edges, that is, $\sum_{i=1}^m (K^0\times 1\times 2^i) < 2^{m+1} \le K$.
Thus, the circulation is upper bounded by $K^{h(r(T))}\times l(T)$.

\textbf{Induction hypothesis:}
The upper bound for any tree $T'$ with $h(r(T'))\le j-1$ is $U_{T'} = K^{h(r(T'))}\times l(T')$. 

\textbf{Induction step:}
We will prove that the upper bound for any tree $T$, with $h(r(T))=j$, is $U_{T} = K^{h(r(T))}\times l(T)$.

Let the subtrees attached at $r(T)$ be $\{T_1, T_2, \dots, T_k\}$. For any cycle in $T$, sum of the circulations of its projections on the subtrees $T_1, T_2, \dots, T_k$ can be at most $\sum_{i=1}^k U_{T_i}$. 

First, we handle the case when $r(T)$ is planar. For any subtree $T_i$, the total circulation of faces in $r(T)$ due to connection to $T_i$ can be $6\times K^{h(r(T_i))}\times l(T_i)$. This is because the circulation of each face adjacent to the separating set connecting with $T_i$ is $2\times K^{h(r(T_i))}\times l(T_i)$, and there can be at most $3$ such faces.
Here, note that for all $i$, level of $r(T_i)$ is one more than level of $r(T)$, and thus height of $r(T_i)$ is one less than height of $r(T)$.
Thus,
\begin{align*}
U_T &= \sum_{i=1}^k U_{T_i} + \sum_{i=1}^k \left(6\times K^{h(r(T_i))}\times l(T_i)\right)\\
&= \sum_{i=1}^k \left(K^{h(r(T_i))}\times l(T_i)\right) + \sum_{i=1}^k \left(6\times K^{h(r(T_i))}\times l(T_i)\right)\\
&= \sum_{i=1}^k \left(7\times K^{h(r(T_i))}\times l(T_i)\right)\\
&= 7\times K^{h(r(T))-1}\times \sum_{i=1}^k l(T_i) && (\forall i,\ h(r(T_i)) = h(r(T))-1)\\
&< K^{h(r(T))} \times \sum_{i=1}^k l(T_i) && (K > 7)\\
&= K^{h(r(T))}\times l(T)
\end{align*}

Now, consider the case when $r(T)$ is a small non-planar graph. The maximum possible contribution from edges of $r(T)$ to the     circulation of a cycle in $T$ is less than $2^{m+1}\times K^{h(r(T))-1}\times l(T)$. Similar to the case when $r(T)$ is planar, contribution from all subtrees is at most $K^{h(r(T))-1}\times l(T)$. The total circulation of a cycle in $T$ can be at most the sum of these two bounds, and is thus bounded above by $(2^{m+1}+1)\times K^{h(r(T))-1}\times l(T)$. Since, $K>2^{m+2}$, the total possible circulation is less than $K^{h(r(T))}\times l(T)$.

Therefore, the upper bound $U_T = K^{h(r(T))}\times l(T)$.
\end{proof}

To see that each cycle gets a nonzero circulation, recall Lemma~\ref{lem:breakCycle}, which says that the circulation of the cycle is the sum of circulations of its projections on different components. 
Consider a cycle $C$.
We look at the minimum `level' component on which $C$ has a non-empty projection.
We show two things: (i) the contribution to the circulation from this component is nonzero, and
(ii) the contribution to the circulation from this component is larger than sum of all the circulation contributions from its higher level descendants in the working tree.

Observe that proving the above two will automatically prove that any cycle $C$ projecting on multiple component nodes has a non-zero circulation. This is because the nodes having non-empty projection from cycle $C$ form a subtree $S_C$ in the component tree; and when looking at the nodes of $S_C$ in the working tree $\mathcal{T}$, we can always find a node $v^*\in S_C$ such that all other nodes in $S_C$ are its descendants (see Observation \ref{obv:subtree}).
Let $v^*$ be the root of a subtree $T$ in the working tree. If the contribution from $v^*$~(or equivalently $r\left(T\right)$) to the cycle circulation is non-zero and exceeds the contribution from all its descendants, circulation of the cycle $C$ is certainly non-zero.

Again, let the subtrees attached at $r(T)$ be $\{T_1, T_2, \dots, T_k\}$.

Case 1: When the component is constant-sized.
It is easy to see that the circulation of any cycle in this component 
will be nonzero as long as it takes a real edge, because the weights given are powers of $2$. Also, the minimum weight of any edge in $r(T)$ is $2\times \sum_{i=1}^k U_{T_i}$. Thus, when a cycle takes a real edge, contribution to its circulation from $r(T)$ is larger than contribution from higher level components (components in the subtrees attached at $r(T)$).
And any cycle has to take a real edge, as the virtual edges and triangles 
all have disjoint set of vertices. (Here, the virtual triangle does not count as a cycle).


Case 2: When the component is planar. The crucial observation here is that all the faces inside a cycle contribute to its circulation in the same orientation.

\begin{lemma}
\label{lem:circulationFaces}
In a planar graph with a given planar embedding, circulation of a cycle in clockwise orientation is the 
sum of circulations of the faces inside it (Proof given in Appendix~\ref{sec:appendix}).
\end{lemma}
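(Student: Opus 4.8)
The plan is to prove Lemma~\ref{lem:circulationFaces} by induction on the number of faces enclosed by the cycle $C$. Fix the planar embedding and let $C$ be a simple cycle traversed clockwise. The cycle $C$ encloses a set of inner faces $F_1, \dots, F_t$; I want to show $\mathrm{circ}(C) = \sum_{i=1}^t \mathrm{circ}(F_i)$, where each face is also traversed clockwise.

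\textbf{Base case.} When $C$ encloses exactly one face $F$, then $C$ \emph{is} the boundary of $F$, and traversing $C$ clockwise is by definition traversing $F$ clockwise, so the circulations are literally equal.

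\textbf{Inductive step.} Suppose $C$ encloses $t \geq 2$ faces. I would pick an edge $e$ that lies strictly inside $C$ (such an edge exists: since there are at least two enclosed faces, some edge separates two of them). This edge $e = (u,v)$ splits the region enclosed by $C$ into two parts, giving two cycles $C'$ and $C''$ along the boundaries of these parts, each enclosing strictly fewer than $t$ faces, with $C'$ and $C''$ together covering all the edges of $C$ plus the edge $e$, which appears in $C'$ in one direction and in $C''$ in the opposite direction. Here I must be careful to orient $C'$ and $C''$ so that each is traversed clockwise with respect to its own enclosed region; with that convention, $e$ contributes $w(u,v)$ to one of them and $w(v,u) = -w(u,v)$ to the other (using skew-symmetry of $w$), so these two contributions cancel in the sum $\mathrm{circ}(C') + \mathrm{circ}(C'')$, and every other edge of $C'$ and $C''$ is an edge of $C$ traversed in the same direction as in $C$. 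Hence $\mathrm{circ}(C') + \mathrm{circ}(C'') = \mathrm{circ}(C)$. Applying the induction hypothesis to $C'$ and $C''$ separately, each equals the sum of clockwise circulations of the faces it encloses, and since $C'$ and $C''$ partition the $t$ faces enclosed by $C$, adding gives $\mathrm{circ}(C) = \sum_{i=1}^t \mathrm{circ}(F_i)$.

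\textbf{The main obstacle} I expect is bookkeeping the orientation: one must verify that when the clockwise traversal of $C$ is cut along an interior edge, the induced traversals of $C'$ and $C''$ are genuinely the clockwise traversals relative to their regions, and that the shared edge $e$ is traversed in opposite senses by $C'$ and $C''$. This is intuitively clear from the planar picture but requires invoking the Jordan curve theorem or a combinatorial analogue (e.g.\ working with the planar rotation system) to make rigorous. A secondary subtlety is the existence of a suitable separating edge $e$: one should argue that if $C$ encloses $\geq 2$ faces then not all enclosed edges can be on the boundary $C$ itself, so some chord or interior edge is available to perform the split. Everything else is a routine application of skew-symmetry and the inductive hypothesis.
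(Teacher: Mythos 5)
Your overall strategy is the same as the paper's: induct on the number of enclosed faces and use skew-symmetry so that the contributions of shared interior edges cancel. The paper's induction step peels off a single face $f_k$ and compares $C$ with the cycle $C'$ bounding the remaining $k-1$ faces, whereas you bisect the enclosed region into two sub-cycles and apply a strong induction hypothesis to each; these are minor variants of the same argument.

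There is, however, a concrete problem with your splitting step. A single edge $e$ lying strictly inside $C$ does \emph{not} in general split the region enclosed by $C$ into two parts bounded by cycles $C'$ and $C''$: that happens only when $e$ is a chord of $C$ (both endpoints on $C$). If, for instance, $C$ is a hexagon whose interior contains a triangle joined to $C$ by three disjoint edges, then $C$ encloses four faces and has plenty of interior edges but no chord, and cutting along any single interior edge leaves the enclosed region connected, so the cycles $C'$ and $C''$ you describe do not exist. Your hedge that ``some chord or interior edge is available'' conflates the two cases; an interior non-chord edge does not perform the split. The standard repair is to cut along a $C$-path (an ear: a path with both endpoints on $C$, internally disjoint from $C$), which exists whenever $C$ encloses at least two faces in a $2$-connected plane graph --- and the paper does restrict attention to biconnected graphs --- or to peel off one face whose boundary shares an edge with $C$, as the paper does. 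Alternatively, the induction can be avoided entirely: summing the clockwise circulations of all enclosed faces, each interior edge occurs in exactly two enclosed faces with opposite orientations and cancels by skew-symmetry, while each edge of $C$ occurs in exactly one enclosed face with the same orientation it has in the clockwise traversal of $C$; this double count yields the identity directly.
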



As all faces have positive circulation in clockwise direction, the total sum 
remains nonzero.
Now, observe that if the cycle $C$ goes through the subtree $T_i$, then its projection in $r(T)$, say $C_i$, must contain at least one of the faces adjacent to the pair/triplet in $r(T)$, at which $T_i$ is connected. Since, circulation of this face is $2U_{T_i}$, contribution from this component will surpass the total sum of all the subtrees where $C$ passes through.

Thus, we can conclude the following.
\begin{lemma}
\label{lem:multiCycleNonZero}
Circulation of any cycle which passes through at least two components is nonzero.
\end{lemma}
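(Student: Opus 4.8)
The plan is to assemble Lemma~\ref{lem:multiCycleNonZero} from the pieces already built up, by locating a canonical ``central'' component for a given cycle and showing its circulation contribution dominates. Let $C$ be a cycle passing through at least two components. By Lemma~\ref{lem:breakCycle}, the circulation of $C$ equals the sum of circulations of its projections on the component nodes. The nodes on which $C$ has a non-empty projection form a subtree $S_C$ of the component tree. Applying Observation~\ref{obv:subtree} to $S_C$ (viewed inside the working tree $\T$), I get a distinguished node $v^*\in S_C$ of minimum level such that every other node of $S_C$ is a descendant of $v^*$ in $\T$. Write $v^* = r(T)$ for a subtree $T$ of $\T$, and let $T_1,\dots,T_k$ be the subtrees attached at $r(T)$. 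Since $C$ passes through at least two components and all of $S_C$ lies in $T$, at least one $T_i$ contains a node of $S_C$, so $C$ genuinely enters some $T_i$; hence its projection $C_{r(T)}$ on $r(T)$ is non-empty and in fact uses a virtual edge or virtual-triangle edge of the separating set connecting $r(T)$ to that $T_i$.

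The core of the argument is then the two-part claim: (i) the contribution of $r(T)$ to the circulation of $C$ is non-zero, and (ii) this contribution strictly exceeds in absolute value the sum of the absolute contributions of all descendant components, which by Lemma~\ref{lem:upperBound} is at most $\sum_{i=1}^k U_{T_i} = \sum_{i=1}^k K^{h(r(T_i))}\times l(T_i)$. I would split into the two cases already set up. If $r(T)$ is constant-sized, the projection $C_{r(T)}$ must use a real edge (the virtual edges and virtual triangles attached at $r(T)$ have pairwise disjoint vertex sets, so no cycle can be formed from virtual edges alone), and since the real edges carry distinct powers of two scaled by $K^{h(r(T))-1}\times l(T)$, any signed subset sum of them is non-zero; moreover the smallest such weight is $2\times K^{h(r(T))-1}\times l(T) = 2\sum\dots \ge 2\sum_{i=1}^k U_{T_i}$ (using $h(r(T_i)) = h(r(T))-1$ and $\sum l(T_i) = l(T)$), giving (ii). If $r(T)$ is planar, I invoke Lemma~\ref{lem:circulationFaces}: the circulation of $C_{r(T)}$ clockwise is the sum of the (assigned) circulations of the faces it encloses, all of which are strictly positive, so the sum is non-zero — that is (i). For (ii), since $C$ enters $T_i$, the projection $C_{r(T)}$ separates the two copies of the separating set's virtual clique, hence encloses at least one face adjacent to that virtual edge/triangle; that face was assigned circulation $2U_{T_i}$, so the contribution of $r(T)$ is at least $\sum_{i : C \text{ enters } T_i} 2U_{T_i} \ge \sum_i U_{T_i}$, again using that the face circulations are all of the same sign so nothing cancels.

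Finally, I would combine: letting $x$ denote the circulation contribution of $r(T)$ and $y$ the total contribution of all strictly-descendant components, we have shown $|x| > |y|$ (and $x\ne 0$), hence the total circulation of $C$, which is $x + y$ up to the contribution of components outside $T$ — but there are none, since $S_C\subseteq T$ — satisfies $|x+y| \ge |x| - |y| > 0$. Therefore every cycle through at least two components has non-zero circulation, which is the statement of Lemma~\ref{lem:multiCycleNonZero}.

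The step I expect to be the main obstacle is the geometric claim in the planar case that, whenever $C$ genuinely passes through the subtree $T_i$, its projection onto $r(T)$ must enclose at least one face adjacent to the virtual clique of the relevant separating set. This needs the assumption (property~3) that virtual triangles are faces of the embedding, the fact that a cycle visits each component at most once (so the projection is a genuine simple cycle using exactly one virtual edge of that separating set), and a planarity/Jordan-curve argument that a simple cycle passing through two of the (at most three) vertices of the separating clique must have at least one incident face on its interior side. Getting this right — especially handling the triplet case where the cycle uses one edge of the virtual triangle and the third vertex may lie on either side — is the delicate part; everything else is bookkeeping with the already-established Lemmas~\ref{lem:breakCycle}, \ref{lem:upperBound}, \ref{lem:circulationFaces} and Observation~\ref{obv:subtree}.
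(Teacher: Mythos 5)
Your proof is correct and follows essentially the same route as the paper: decompose via Lemma~\ref{lem:breakCycle}, pick the minimum-level node $v^*=r(T)$ of the projection subtree via Observation~\ref{obv:subtree}, and show that the contribution of $r(T)$ is nonzero and dominates the descendants' contributions, splitting into the constant-sized case (distinct powers of two, so a signed subset sum has absolute value at least the smallest weight $2\,K^{h(r(T))-1}l(T)$) and the planar case (Lemma~\ref{lem:circulationFaces} plus nonnegative face circulations and the enclosed face of circulation $2U_{T_i}$). The only phrasing to tighten is that the descendants' total should be compared against $\sum_{i:\,C\text{ enters }T_i} U_{T_i}$ rather than $\sum_{i=1}^{k} U_{T_i}$ --- which is exactly what your lower bound $\sum_{i:\,C\text{ enters }T_i} 2U_{T_i}$ beats, and is also how the paper states it.
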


\textbf{Weights from faces to edges:}
Now, we come back to the question of assigning weights to the edges in a planar component such that
the faces get the desired circulations. 
Lemma~\ref{lem:faceToEdge} describes this procedure for any planar graph.
But, the scheme will assign weights to all the edges, 
while we are not allowed to give weights to virtual edges/triangles.
So, first we collapse all the virtual triangles to one node and all the
virtual edges to one node.
As no two virtual triangles/edges are adjacent, after this operation,
every face remains a non-trivial face (except the virtual triangle face).
Now, we apply the procedure from Lemma~\ref{lem:faceToEdge}. 
After undoing the collapse, the circulations of the faces will not change
and we will have the desired circulations. 

\begin{lemma}{\cite{Kor09}}
\label{lem:faceToEdge}
Let $G(V,E)$ be a planar graph with $F$ being its set of inner faces in some planar embedding.
For any given function on the inner faces $w' : F \to \Z$, a skew symmetric 
weight function $w \colon \vec{E} \to \Z$ can be constructed in log-space such that every face $f \in F$
has a circulation $w'(f)$ (Proof is described in Appendix~\ref{sec:appendix}).
\end{lemma}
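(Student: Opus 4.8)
The plan is to reduce the problem to computing, for each edge, a single number that can be read off from the planar embedding by a log-space-computable quantity, namely a sum of face-values over one side of a suitable cut. First I would fix the planar embedding and designate the outer face $f_0$. The key structural idea is the standard duality between circulations and face potentials: if we think of the desired circulation $w'(f)$ of an inner face $f$ as the net ``flow'' that must cross the boundary of $f$ in the clockwise sense, then specifying all inner-face circulations is the same as specifying a discrete $1$-form whose ``coboundary'' on each bounded $2$-cell equals $w'$. Such a $1$-form always exists because the bounded faces, together with the unbounded face, span the cycle space appropriately; concretely, one exhibits it explicitly rather than invoking existence.

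The concrete construction I would use: root the dual graph at the outer face $f_0$ and, for every inner face $f$, let $P(f)$ be the sum of $w'$ over all inner faces lying in the subtree hanging below $f$ in some fixed spanning tree of the dual — or, more robustly and more clearly log-space-computable, define for each inner face $f$ a potential $\phi(f)$ as the sum of $w'(g)$ over all inner faces $g$ that are ``enclosed together with $f$'' relative to a fixed reference path to the outer face, and set $\phi(f_0)=0$. Then for a directed edge $(u,v)$ with inner face $f_{\mathrm{left}}$ on its left and face $f_{\mathrm{right}}$ on its right (one of which may be $f_0$), put $w(u,v) = \phi(f_{\mathrm{left}}) - \phi(f_{\mathrm{right}})$, and $w(v,u) = -w(u,v)$, which is skew-symmetric by construction. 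Traversing the boundary of an inner face $f$ clockwise, every edge on the boundary has $f$ on one fixed side, so the telescoping sum of $w$ along the boundary collapses to a difference of $\phi$-values around a cycle in the dual that bounds exactly $f$, which by the definition of $\phi$ equals $w'(f)$. This is the verification step and it is essentially a telescoping/chain-rule computation once the potential is set up correctly.

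The two steps I expect to carry the real weight are: (i) choosing the definition of the potential $\phi$ so that it is genuinely computable in log-space — this needs the planar embedding to be given and needs a log-space routine for the enclosure relation ``face $g$ is inside the cycle bounding $f$'', which can be obtained by a planar reachability / left-right traversal argument, or by using a known log-space algorithm for computing a spanning tree of the planar dual and subtree sums along it; and (ii) checking correctness at the outer face and at edges incident to $f_0$, i.e.\ confirming that we do not inadvertently constrain the outer face's circulation, which is automatic since $\phi(f_0)=0$ and the outer face is simply excluded from $F$. The main obstacle is really point (i): proving that the chosen potential is log-space-computable, since naive recursive subtree summation is not obviously in log-space — here one leans on the fact that iterated addition and tree/subtree accumulation in a log-space-constructible tree are in log-space (e.g.\ via Reingold-style reachability in the planar dual), and that the embedding being part of the input lets us determine left/right sides of each edge locally. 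Once $\phi$ is in hand, skew-symmetry is immediate and the circulation identity is a one-line telescoping argument.
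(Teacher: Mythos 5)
There is a genuine gap, and it sits exactly at the step you call ``a one-line telescoping argument'': the rule $w(u,v) = \phi(f_{\mathrm{left}}) - \phi(f_{\mathrm{right}})$, applied to \emph{every} edge, does not give face $f$ circulation $w'(f)$, and no telescoping occurs. A potential on faces telescopes along a \emph{path in the dual graph}, but the boundary edges of a face $f$ correspond in the dual to the star of edges incident to the dual vertex $f$, not to a dual path or cycle. Traversing $\partial f$ clockwise, every directed boundary edge has $f$ itself on the same fixed side, so each term is $\pm\left(\phi(g_e)-\phi(f)\right)$ with $g_e$ the face across edge $e$, and the sum is $\pm\left(\sum_{e \in \partial f} \phi(g_e) - \deg(f)\,\phi(f)\right)$ --- a dual-Laplacian expression in $\phi$, not $w'(f)$. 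Concretely, for a triangle with a single inner face $f_1$ your rule gives circulation $\pm 3\,\phi(f_1) = \pm 3\,w'(f_1)$; in general, forcing the dual Laplacian of $\phi$ to equal $w'$ means solving a linear system whose solution need not be integral or log-space computable. (The potential-difference rule is the planar duality for flows that are divergence-free at \emph{vertices}; that is not the constraint being imposed here.)

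The fix is the first construction you mention and then abandon, which is also the paper's: take a spanning tree $T^*$ of the dual rooted at the outer face, set $w=0$ on every edge \emph{not} dual to a tree edge, and for each inner face $f$ place the subtree sum $P(f)=\sum_{g \in T^*_f} w'(g)$ on the single edge $e_f$ dual to the tree edge joining $f$ to its parent, oriented clockwise with respect to $f$. The clockwise circulation of $f$ then picks up $+P(f)$ from $e_f$ and $-P(f_i)$ from each child $f_i$ (the shared edge is traversed oppositely in the two clockwise traversals), and $P(f)-\sum_i P(f_i) = w'(f)$ exactly. Your log-space observations (dual graph and its spanning tree via Reingold-style reachability, subtree sums via the standard log-space tree traversal) then apply verbatim; the missing idea is only that the face values must be concentrated on the edges dual to the tree edges rather than spread over all edges as potential differences.
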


\paragraph{Circulation of cycles contained within a single component:}
%
%
For planar components, to construct $w_1$, we assign $+1$ circulation to every face using Lemma~\ref{lem:faceToEdge}
(similar to the case of multiple components). This would ensure nonzero circulation for every cycle 
within the planar component. 
This construction has been used in \cite{Kor09} for bipartite planar graphs. 
\cite{TV12} also gives a log-space construction which ensures nonzero circulation for all cycles in a planar graph,
using Green's theorem.

For the non-planar components, $w_0$ already ensures that each cycle has non-zero circulation. Therefore, we set $w_1=0$.
Use a linear combination of $w_0$ and $w_1$ such that they do not interfere with each other. 
This together with Lemma~\ref{lem:multiCycleNonZero} gives us the following.
\begin{lemma}
Circulation of any cycle is non-zero.
\end{lemma}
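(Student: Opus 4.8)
The plan is to assemble the final statement—that every cycle in a graph $G \in \Gt$ receives a nonzero circulation under our weight function—from the two cases already isolated in the exposition: cycles that live entirely inside a single component node, and cycles that project onto at least two component nodes. The second case is exactly Lemma~\ref{lem:multiCycleNonZero}, which has been established via the upper-bound machinery of Lemma~\ref{lem:upperBound} together with the ``central node'' structure of the working tree (Observation~\ref{obv:subtree}) and the face-circulation identity (Lemma~\ref{lem:circulationFaces}). So the remaining content is to verify the single-component case and then show the two weight functions can be combined without destroying either guarantee.

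First I would pin down $w_1$. For a planar component, apply Lemma~\ref{lem:faceToEdge} with the target function $w'(f) = +1$ for every inner face (after the collapse of virtual edges/triangles, so that no weight lands on a virtual edge), obtaining a skew-symmetric edge weighting in which every inner face has circulation $1$. By Lemma~\ref{lem:circulationFaces}, any cycle inside this component has circulation equal to the number of faces it encloses, which is at least $1$, hence nonzero; moreover every such circulation is bounded by the number of faces, a constant times the component size, so $w_1$ on a component is bounded in the obvious way. For a constant-sized (non-planar) component, $w_0$ by itself already forces nonzero circulation on any cycle that uses a real edge (powers of two cannot cancel), and a cycle consisting only of virtual edges is impossible since virtual edges/triangles have pairwise-disjoint vertex sets; so here I set $w_1 = 0$.

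Then I would combine: take $w = w_0 + D \cdot w_1$ where $D$ is a scaling factor chosen larger than the maximum absolute value of any $w_0$-circulation of a cycle (which by Lemma~\ref{lem:upperBound} applied at the root is at most $U_\T = K^{h(r(\T))} \cdot l(\T)$, a polynomial), and where in turn $w_1$-circulations are polynomially bounded. Since $w_0$ and $w_1$ are both skew-symmetric, so is $w$. Now for any cycle $C$: if $C$ is confined to a single component node $N$, then its $w_0$-circulation as a single-node cycle is governed only by the contributions local to $N$ versus the subtrees hanging off $N$—but wait, this is precisely where one must be careful, so the argument splits. If $N$ is constant-sized, $C$ must use a real edge and $w_0(C) \neq 0$ already by Case~1 of the preceding discussion, and since $w_1 = 0$ on $N$ we are done. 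If $N$ is planar, $w_0(C)$ might conceivably be zero (the faces $C$ encloses need not be adjacent to any separating set carrying a subtree), but then the $D \cdot w_1(C)$ term equals $D$ times a strictly positive integer; and because $D$ was chosen to exceed every possible $|w_0|$-circulation, $w(C) \neq 0$. If instead $C$ projects onto at least two nodes, Lemma~\ref{lem:multiCycleNonZero} gives $w_0(C) \neq 0$, and since $D \cdot w_1(C)$ is an integer multiple of $D > |w_0(C)|$ it cannot cancel $w_0(C)$—equivalently, $w_0(C) \bmod D \neq 0$ so $w(C) \bmod D \neq 0$. In all cases $w(C) \neq 0$, which is the claim.

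The main obstacle, and the place I would spend the most care, is the interference bookkeeping: making precise that the scaling $D$ is simultaneously (a) large enough that $D \cdot w_1$ cannot be cancelled by $w_0$ on \emph{any} cycle, (b) polynomially bounded so the final weights stay polynomial (needed for the log-space/$\SPL$ conclusions), and (c) compatible with the fact that $w_0$ and $w_1$ are built component-by-component so the ``$w_1 = 0$ on non-planar nodes'' convention interacts cleanly with the $w_0$ guarantee there. Concretely I would take $D = U_\T + 1 = K^{h(r(\T))} l(\T) + 1$; since $K$ is a constant and $h(r(\T)) = O(\log n)$ forces $K^{h(r(\T))}$ to be polynomial while $l(\T) \le n$, this $D$ is polynomial, and the earlier upper-bound lemma certifies it dominates $|w_0(C)|$ for every cycle $C$. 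Everything else is a direct appeal to lemmas already proved.
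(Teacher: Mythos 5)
Your proposal follows the paper's argument exactly: split into the single-component case (handled by $w_1$ for planar nodes via Lemmas~\ref{lem:circulationFaces} and~\ref{lem:faceToEdge}, and by $w_0$ for constant-sized nodes) and the multi-component case (Lemma~\ref{lem:multiCycleNonZero}), then combine $w_0$ and $w_1$ by a non-interfering scaling. Your explicit choice $D = U_\T + 1$ and the mod-$D$ cancellation argument simply make precise the paper's phrase ``a linear combination such that they do not interfere with each other,'' so this is the same proof with the bookkeeping filled in.
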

\paragraph{Polynomially bounded weights:}
Now, we show that the weight given by this scheme is polynomially bounded.
\begin{lemma}
The total weight given by the weighting scheme is polynomially bounded. 
\end{lemma}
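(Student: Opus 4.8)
The plan is to bound the weight of each edge and then sum over all edges. The weighting scheme gives each edge a weight that is a product of three quantities: a local factor (a power of $2$ bounded by $2^{m+2}$ for constant-sized components, or the circulation-to-edge conversion from Lemma~\ref{lem:faceToEdge} for planar components), the scaling factor $K^{h(r(T))-1}$ where $T$ is the subtree rooted at the component containing the edge, and the leaf-count factor $l(T)$. So the key quantities to control are the maximum height $L$ of the working tree, the constant $K$, and the total number of leaves.

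First I would recall that the working tree has depth $O(\log n)$, hence $L = O(\log n)$ and so $K^{h(r(T))-1} \le K^{L} = K^{O(\log n)} = n^{O(\log K)}$, which is polynomial since $K$ is a constant. Next, $l(T) \le n$ trivially, since the number of leaves of any subtree of the working tree is at most the number of component-tree nodes, which is at most the number of vertices of $G$ (each component contributes at least one vertex not shared, after the modifications of Section~\ref{sec:reductions}). The local factor is bounded by a constant ($2^{m+2}$ with $m = 60$) in the constant-sized case; for the planar case I would invoke the bound implicit in Lemma~\ref{lem:faceToEdge} — the construction of Koroth assigns each edge a weight polynomial in the sum of absolute values of the prescribed face circulations and the size of the component, and here the prescribed circulations are at most $6 U_{T_i} \le 6 K^L n$ in magnitude, hence polynomially bounded. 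Therefore each edge gets a weight of magnitude at most $\mathrm{poly}(n)$, and summing over the at most $O(n^2)$ directed edges (or over the $O(m)$ edges per component across $O(n)$ components) still gives a polynomial bound. Finally, $w_1$ contributes only $+1$ per face in a planar component, which after Lemma~\ref{lem:faceToEdge} is again polynomially bounded, and the linear combination of $w_0$ and $w_1$ multiplies by at most another polynomial factor (the scale separating them, which is bounded by the maximum magnitude of $w_1$ plus one).

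I would carry out the argument in this order: (1) fix $L = O(\log n)$ from the working-tree construction; (2) show the per-edge weight bound for constant-sized components is $2^{m+2} \cdot K^{L} \cdot n = \mathrm{poly}(n)$; (3) show the prescribed face circulations in planar components are at most $6 K^{L} n$, hence polynomially bounded, and then cite the polynomial bound in Lemma~\ref{lem:faceToEdge}'s construction to conclude per-edge weights are polynomial there too; (4) bound $w_1$ by $\mathrm{poly}(n)$ similarly; (5) combine $w_0$ and $w_1$ with a polynomial scale factor and sum over all edges. The main obstacle I expect is step (3): I need the precise statement that the edge weights produced by the face-to-edge conversion of Lemma~\ref{lem:faceToEdge} are bounded by a polynomial in the input size and in the magnitudes of the requested face circulations — this is not explicitly stated in the lemma as quoted here, so I would need to either extract it from the appendix proof of that lemma or re-derive it (e.g. via the observation that one valid assignment sets the weight of an edge to a partial sum of face circulations along a spanning-tree path in the dual, giving a bound of (number of faces)$\times$(max circulation)). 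Everything else is routine arithmetic with the constant $K$ and the $O(\log n)$ depth.
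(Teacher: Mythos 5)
Your proposal is correct, and the key arithmetic is the same as the paper's ($K$ constant, working-tree depth $O(\log n)$, leaf count at most $n$, hence $K^{O(\log n)}\cdot n = n^{O(\log K)}$), but the bookkeeping is organized differently. The paper does not bound edges one at a time; instead it reuses the quantity $U_{\T}$ from Lemma~\ref{lem:upperBound} directly as a bound on the \emph{total} weight of $w_0$, observing that $U_{\T}$ is by construction the sum of all edge weights in the constant-sized components plus the sum of all prescribed face circulations in the planar components, and that (by the construction in Lemma~\ref{lem:faceToEdge}) the latter controls the edge weights it produces. Your route is a per-edge bound of the form $(\text{local factor})\times K^{L}\times l(T)$ summed over $O(n^2)$ edges, which is more pedestrian but equally valid. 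On the one point you flag as a potential obstacle --- that Lemma~\ref{lem:faceToEdge} as stated gives no magnitude bound on its output --- your proposed fix is exactly what the appendix construction delivers: each tree edge $e_f$ of the dual spanning tree receives weight $w'(T_f^*)=\sum_{f_1\in T_f^*}w'(f_1)$, a subtree sum of face circulations, so every edge weight is bounded by the sum of the absolute values of all requested circulations, which you have already shown to be polynomial. If anything, your accounting here is more careful than the paper's one-line assertion that the sum of face circulations ``equals'' the sum of edge weights (which is only true up to a factor of the dual-tree depth, a discrepancy that is harmless for polynomial boundedness). Your treatment of the $w_0$/$w_1$ combination by a polynomial scale factor also matches the paper's intent.
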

\begin{proof}
The weight $w_1$ is polynomially bounded according to the procedure in Lemma \ref{lem:faceToEdge}.

Consider $w_0$. Observe that the upper bound $U_\T$ for the circulation of a cycle in $\T$ is actually just the sum of weights of all the edges in constant sized components, and of all the faces in planar components.
Also, sum of the circulations of faces in a planar graph equals the sum of weight given to edges, by the construction given in the proof of Lemma \ref{lem:faceToEdge}.
Therefore, $U_\T$ gives the bound on the weight function $w_0$. Since the maximum level of any node in $\T$ can be at most $O(\log \abs \T)$, the height of $r(T)$, that is $h(r(T)) = O(\log\abs\T)$. Also, the total number of leaves in $\T$ is at most $\abs\T$.
\[U_\T = K^{h(r(\T))} \times l(\T) \le K^{O(\log\abs\T)}\times \abs\T = \abs\T^{O(\log K)}\abs\T = \abs\T^{O(\log K)}\]
If $n$ is the size of the original graph $G$, then clearly $\abs \T \le n$. Therefore, $U_\T~=~O(n^{O(\log K)})$. Recall that $K$ is a constant, and thus, $w_0$ is also polynomially bounded.

Since we use a linear combination of $w_0$ and $w_1$, the total weight function is polynomially bounded.
\end{proof}

\subsection{Construction of the Working Tree}
\label{sec:workingTree}
Now, we describe the log-space construction of the working tree. 
The idea is inspired from the construction of \cite[Lemma 6]{LMR07}, where
they create a $O(\log n)$-depth tree of well-matched substrings of a given well-matched string.
Recall that for a tree $S$, the working tree $\wt(S)$ is constructed by
first choosing a center node $c(S)$ of $S$
and marking it as the root of $\wt(S)$,
and then recursively finding the working trees for each component 
obtained by removing the node $c(S)$ from $S$ and connecting them to 
root of $\wt(S)$, as subtrees.

First consider the following possible definition of the center:
for any tree $S$ with $n$ nodes, one can define its 
center to be a node whose removal would give 
disjoint components of size $\leq 1/2 \abs{S}$. 
Finding such a center is an easy task and can be done in log-space.
Clearly, the depth of the working tree would be $O(\log n)$.
It is not clear if the recursive procedure of finding centers for
each resulting component can be done in log-space. 
Therefore, we give a more complicated way of defining the centers,
so that the whole recursive procedure can be done in log-space.

First, we make the tree $S$ rooted at an arbitrary node $r$.
To find the child-parent relations of the rooted tree, one can do 
the standard log-space traversal of a tree: for every node, give its edges
an arbitrary cyclic ordering. Start traversing from the root $r$ by taking an arbitrary edge.
If you arrive at a node $u$ using its edge $e$ then leave node $u$ using the right neighbor of $e$.
This traversal ends at $r$ with every edge being traversed exactly twice. 

For any node $v$, let $S_v$ denote the subtree of $S$, rooted at $v$. 
For any node $v$ and one of its descendant nodes $v'$ in $S$, 
let $S_{v,v'}$ denote the tree $S_v \setminus S_{v'}$.
Moreover $S_{v,\epsilon}$ would just mean $S_v$, for any $v$.
With our new definition of the center, 
at any stage of the recursive procedure, the component under consideration will 
always be of the form $S_{v,v'}$, for some nodes $v, v' \in S$.
Now, we give a definition of the center for a rooted tree of the form
$S_{v,v'}$.

\textbf{Center $c(S_{v,v'})$:}
case (i) When $v' = \epsilon$, i.e.\ the given tree is $S_v$.
Let $c$ be a node in $S_v$, such that its removal gives
components of size $\leq 1/2 \abs{S_v}$.
If there are more than one such nodes then choose the lexicographically smaller one
(there is at least one such center \cite{Jor69}).
Define $c$ as the center of $S_{v,v'}$.

Let the children of $c$ in $S_v$ be $\{c_1, c_2, \dots, c_k\}$.
Clearly, after removing $c$ from $S_v$, the components we get are
$S_{c_1}, S_{c_2}, \dots, S_{c_k}$ and $S_{v,c}$.
Thus, they are all of the desired form and have size $\leq 1/2 \abs{S_v}$.

case (ii) When $v'$ is an actual node in $S_{v}$. 
Let the node sequence on the path connecting $v$ and $v'$ be
 $(u_0, u_1, \dots, u_p)$,
with $u_0 = v$ and $u_p = v'$.
Let $0 \leq i \leq p$ be the least index such that
$\abs{S_{u_{i+1},v'}} \leq 1/2 \abs{S_{v,v'}}$.
This index exists because $\abs{S_{u_p,v'}}=0$.
Define $u_i$ as the center of $S_{v,v'}$.

Let the children of $u_i$, apart from $u_{i+1}$, be $\{c_1, c_2, \dots, c_k\}$.
After removal of $u_i$ from $S_{v,v'}$, the components we get are
$S_{c_1}, S_{c_2}, \dots, S_{c_k}$, $S_{u_{i+1},v'}$ and $S_{v,u_i}$.
By the choice of $i$, $\abs{S_{u_i, v'}} > 1/2 \abs{S_{v,v'}}$.
Thus, $\abs{S_{v,u_i}} \leq 1/2 \abs{S_{v,v'}}$.
So, the only components for which we do not have a guarantee on their sizes,
are $S_{c_1}, S_{c_2}, \dots, S_{c_k}$.
Observe that when we find a center for the tree $S_{c_j,\epsilon}$ in the next recursive call, 
it will fall into case (i) and
the components we get will have their sizes reduced by a factor of $1/2$.

Thus, we can conclude that in the recursive procedure for constructing
the working tree, we reduce the size of the component by half in at most two recursive calls.
Hence, the depth of working tree is $O(\log n)$.
Now, we describe a log-space procedure to construct the working tree. 

\begin{lemma}
\label{lem:workingTree}
For any tree $S$, its working tree $\wt(S)$ can be constructed 
in log-space. 
\end{lemma}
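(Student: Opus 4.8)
The plan is to turn the recursive description of $\wt(S)$ into a single log-space algorithm that, for each node $v$ of $S$, outputs its parent in $\wt(S)$; producing this parent pointer for every node clearly suffices to describe the whole working tree. The key observation, already established in the discussion preceding the statement, is that every subtree arising in the recursion has the canonical form $S_{v,v'}$ for two nodes $v,v'$ of the fixed rooted tree $S$ (with $v'=\epsilon$ allowed), and that the center $c(S_{v,v'})$ is defined by an explicit, local rule: in case (i) it is the lexicographically least size-halving node of $S_v$, and in case (ii) it is the node $u_i$ on the $v$-to-$v'$ path singled out by the least-index condition $\abs{S_{u_{i+1},v'}}\le \tfrac12\abs{S_{v,v'}}$. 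So the first step is to note that, given any pair $(v,v')$, the center $c(S_{v,v'})$ can be computed in log-space: this only requires (a) the rooted structure of $S$, obtained by the standard Euler-tour traversal described above, (b) the ability to compute subtree sizes $\abs{S_{w}}$ and hence $\abs{S_{v,v'}}=\abs{S_v}-\abs{S_{v'}}$ by counting descendants, which is a log-space reachability-style computation in a tree, and (c) a scan over candidate centers (over all nodes in case (i), over the path nodes $u_0,\dots,u_p$ in case (ii)), each candidate tested by a constant number of subtree-size comparisons.

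Next I would make the recursion explicit as a descent from the root of $\wt(S)$. Starting from the top-level pair $(r,\epsilon)$, repeatedly: compute the current center $c$; its children in $\wt(S)$ are exactly the roots of the working trees of the components obtained by deleting $c$ from $S_{v,v'}$, and those components are themselves of the canonical form — namely $S_{c_j,\epsilon}$ for each child $c_j$ of $c$ (other than $u_{i+1}$ in case (ii)), together with $S_{v,c}$, and in case (ii) also $S_{u_{i+1},v'}$. Thus from a pair $(v,v')$ one passes, in one step, to a constant-branching list of successor pairs, each computable from $(v,v')$ and $c$ in log-space, and — crucially — each of strictly smaller size, with the size halving in at most two steps as argued before the lemma. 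Hence the recursion tree has depth $O(\log n)$, and a root-to-node branch in it is encoded by an $O(\log n)$-length sequence of ``which child'' choices, i.e.\ $O(\log n)$ bits total.

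The log-space implementation then works as follows: to find the parent of a given node $v_0$ in $\wt(S)$, iterate over the levels of the recursion, maintaining only the \emph{current} canonical pair $(v,v')$ (which is $O(\log n)$ bits) and the level counter; at each level compute $c=c(S_{v,v'})$ in log-space; if $c=v_0$, halt and output the center computed at the previous level as the parent (or declare $v_0$ the root of $\wt(S)$ if we are at the top level); otherwise determine which of the constant-many successor components contains $v_0$ — this is again a log-space subtree-membership test, e.g.\ using the Euler-tour intervals — set $(v,v')$ to the corresponding successor pair, and continue. Since only $O(\log n)$ bits of state are carried between levels and each level's work (computing a center, computing subtree sizes, testing membership) reuses the same $O(\log n)$ work space, the whole procedure runs in log-space; repeating it for every $v_0$ yields the parent pointers of all of $\wt(S)$. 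The main obstacle is purely one of bookkeeping: verifying that the successor pairs produced by deleting a center are \emph{exactly} the canonical forms listed in cases (i) and (ii), so that the invariant ``the current component is always some $S_{v,v'}$'' is genuinely maintained, and that the size-halving-in-two-steps claim indeed bounds the recursion depth; both of these are spelled out in the construction above, so the proof mainly consists in assembling them into the stated log-space routine.
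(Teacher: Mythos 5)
Your proposal is correct and follows essentially the same route as the paper: a log-space descent that, for each target node, tracks only the current canonical pair $(v,v')$, the previous center, and the level, recomputing centers and subtree sizes on the fly via standard log-space tree traversal. The only slight imprecision is the aside that deleting a center yields a ``constant-branching'' list of successors (the number of components equals the center's degree, which need not be constant), but this is harmless since your algorithm never enumerates them all --- it only identifies the one containing the target node.
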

\begin{proof}
We just describe a log-space procedure for finding the parent of a given node $x$ 
in the working
tree. Running this procedure for every node will give us the working tree.

Find the center of the tree $S$. Removing the center would give many components. 
Find the component $S_1$, to which the node $x$ belongs.
Apply the same procedure recursively on $S_1$. Keep going to smaller components which contain $x$,
till $x$ becomes the center of some component. The center of the previous component
in the recursion will be the parent of $x$ in the working tree. 

In this recursive procedure, to store the current component $S_{v,v'}$, we just need to store
two nodes $v$ and $v'$. Apart from these, we need to store center of the previous component 
and size of the current component. 

To find the center of a given component $S_{v,v'}$, 
go over all possibilities of the center, depending on whether $v'$ is $\epsilon$ or a node.
For any candidate center $c$, find the sizes of the components generated if $c$ is removed.
Check if the sizes satisfy the specified requirements. 
Any of these components is also of the form $S_{u,u'}$ and thus can be stored with two nodes. 

By the standard log-space traversal of a tree (see, for example \cite{Lin92}), 
for any given tree $S_{v,v'}$, one can count the number of nodes in it and test membership
of a given node. 
Thus, the whole procedure works in log-space.
\end{proof}


\subsection{Complexity of the weight assignment}
\label{sec:timeComplexity}
We use simple log-space procedures in sequence to assign the weights in the working tree.
After construction of the working tree, we use iterative log-space procedures to store the following for each node: i) the level of the node, and ii) the number of leaves in the subtree rooted at it. Both just require tree traversal while keeping a counter, and can clearly be done in log-space.
Also, since we have the maximum level amongst all the nodes, we can use it in another straightforward log-space function to compute the height of every node.
We store one more piece of information. Let the subtrees of the component tree $S$ attached at a node $N$ be $S_1, S_2, \dots, S_k$.
Correspondingly, in the working tree, the children of $N$ will be $r(\wt(S_1)), r(\wt(S_2)), \dots, r(\wt(S_k))$.
For all $i\ (1\le i\le k)$, we remember which virtual edge/triangle of $N$ is shared with the subtree $S_i$ by storing a pointer to the node $r(\wt(S_i))$.

Next, we iterate on the nodes of the working tree to assign the weights.
For every non-planar component, we iterate on edges inside it in an arbitrary (deterministic) fashion, and assign a weight of $2^i\times K^{(h(N)-1)}\times l(T(N))$, where $i$ is the iteration count, $N$ is the node, and $T(N)$ is the subtree rooted at $N$.

In the next step, we again iterate on the nodes, and for every node $N$, we visit all its virtual edges/triangles. For a given virtual edge/triangle $\tau_i$, let the child of $N$ in the working tree attached at $\tau_i$ be $N_{i}$. We add a circulation of $2\times K^{h(N_i)}\times l(T(N_i))$ to all the faces adjacent to $\tau_i$.
As the last step, we find the weights for the edges which would give the desired circulations of the faces.
Lemma~\ref{lem:faceToEdge} shows that it can be done in log-space.


\section{$K_{3,3}$-free and $K_5$-free graphs}
\label{sec:reductions}
In this section, 
we show how to construct the desired component tree 
for any given $K_{3,3}$-free or $K_5$-free graph
and 
modify it to satisfy the assumptions made in Section~\ref{sec:nonzeroCirc}.
All these constructions are in log-space.

\subsection{Biconnected Graphs}
\label{sec:biconnected}
If a graph $G$ is disconnected then a perfect matching in $G$
can be constructed by taking a 
union of perfect matchings in its different connected components. 
As connected components of a graph can be found log-space \cite{Rei08}, 
we will always assume that the given graph is connected. 

Let $G$ be a connected graph. 
A vertex $a$ in $G$ is called an articulation point, if
its removal will make $G$ disconnected. 
A graph without any articulation point is called biconnected. 
Let $a$ be an articulation point in $G$ such that its deletion creates
connected components $G_1, G_2, \dots, G_m$.  
It is easy to see that for $G$ to have a perfect matching, 
exactly one of these components should have odd number of vertices, say $G_1$.
Then, in any perfect matching of $G$, the vertex $a$ will always be matched to a vertex 
in $G_1$. 
Thus, we can delete any edge connecting $a$ to other components,
and all the perfect matchings will still be preserved. 
It is easy to see that finding all the articulation points
and for each articulation point, performing the above mentioned reduction can be done in log-space,
via reachability queries \cite{Rei08,TW14}. 
Thus, we will always assume that the given graph is biconnected. 

\subsection{Matching Preserving Operation}

\begin{figure}
\centering
  \begin{picture}(0,0)%
  \includegraphics{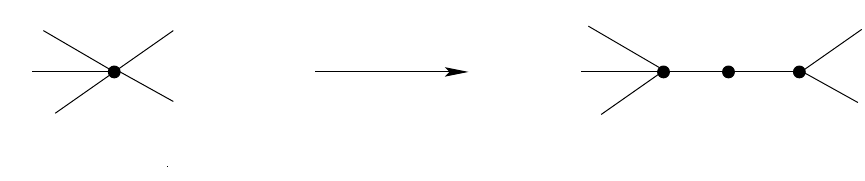}%
  \end{picture}%
  \input{vertexSplit.figtex}%

\caption{Vertex-Split: A vertex $v$ is split into three vertices $v,v',v''$, which are connected by a path. 
Some of the edges incident on $v$ are transferred to $v''$.  }
\label{fig:vertexSplit}
\end{figure}

{\bf Vertex-Split:} For a graph $G$, we define an operation called {\em vertex-split}, 
which {\em preserves matchings}, as follows:
Let $v$ be a vertex and let $X$ be the set of all the edges incident on $v$. 
Let $X_1 \sqcup X_2$ be an arbitrary partition of $X$. 
Create two new vertices $v'$ and $v''$ (see Figure~\ref{fig:vertexSplit}).
 Make the edges $(v,v')$ and $(v',v'')$. 
We call these two edges as {\em auxiliary edges}. 
For all the edges in $X_2$,
change their endpoint $v$ to $v''$.
We denote this operation by vertex-split$(v,X_1,X_2)$.

Let the modified graph be $G'$. 
One can go back to the graph $G$ by identifying vertices $v$, $v'$ and $v''$
and deleting auxiliary edges.
This operation is {\em matching preserving} in the following sense.

\begin{lemma}
There is a one-one correspondence between perfect matchings of $G$ and $G'$.
\label{lem:vertexSplit}
\end{lemma}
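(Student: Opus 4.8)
The plan is to exhibit an explicit bijection $\Phi$ between perfect matchings of $G$ and perfect matchings of $G'$, and then check it is well-defined and invertible. Recall that in $G'$ the vertex $v$ has been replaced by the path $v - v' - v''$, with the two auxiliary edges $(v,v')$ and $(v',v'')$, and the edges of $X_2$ re-routed from $v$ to $v''$ while the edges of $X_1$ stay on $v$. First I would observe the key structural fact: in \emph{any} perfect matching $M'$ of $G'$, the vertex $v'$ must be matched, and its only two neighbours are $v$ and $v''$; hence exactly one of the two auxiliary edges lies in $M'$. If $(v,v') \in M'$, then $v''$ must be matched by some edge of $X_2$ (its only other incident edges), and $v$ is already covered; if $(v',v'') \in M'$, then $v$ must be matched by some edge of $X_1$, and $v''$ is already covered. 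In either case exactly one ``real'' edge incident to $\{v,v''\}$ is used, and it corresponds to a unique edge incident to $v$ in $G$.

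Using this, define $\Phi^{-1} \colon M' \mapsto M$ by: delete whichever auxiliary edge is in $M'$, delete the other auxiliary edge (it is not in $M'$), and for the unique edge $e \in M'$ incident to $v$ or $v''$, replace its endpoint $v''$ by $v$ if necessary; keep all other edges of $M'$ unchanged. Since the auxiliary edges and the path vertices $v', v''$ disappear and $v$ gets covered exactly once by the re-attached edge, $M$ is a perfect matching of $G$. Conversely, define $\Phi \colon M \mapsto M'$ by: let $e$ be the unique edge of $M$ covering $v$; if $e \in X_1$, put $e$ (unchanged) together with $(v',v'')$ into $M'$; if $e \in X_2$, put $e$ with its endpoint $v$ changed to $v''$ together with $(v,v')$ into $M'$; keep all other edges of $M$. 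One checks $M'$ covers $v$, $v'$, $v''$ exactly once each and leaves every other vertex's coverage unchanged, so $M'$ is a perfect matching of $G'$.

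Finally I would verify $\Phi$ and $\Phi^{-1}$ are mutually inverse: starting from $M$, forming $M'$ and then collapsing back clearly returns $M$, because the choice of which auxiliary edge to include is forced by whether $e \in X_1$ or $e \in X_2$, and collapsing $v''$ to $v$ undoes the re-routing; starting from $M'$, the case analysis above shows the collapsed matching $M$ has its $v$-edge in $X_1$ exactly when $(v',v'') \in M'$, so re-applying $\Phi$ recovers $M'$. This gives the claimed one-one correspondence. The only mildly delicate point — and the part worth stating carefully rather than the arithmetic — is the forced parity argument at $v'$ that pins down exactly which auxiliary edge each perfect matching of $G'$ uses; everything else is bookkeeping on the incident edges.
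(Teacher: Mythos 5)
Your proof is correct and follows essentially the same route as the paper: split on whether $v$ is matched through $X_1$ or $X_2$, add the complementary auxiliary edge to pass from $G$ to $G'$, and collapse the path $v\hbox{--}v'\hbox{--}v''$ to go back. You spell out the forced choice at $v'$ more explicitly than the paper does, but the underlying bijection is identical.
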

\begin{proof}
Consider a perfect matching $M$ in $G$, where $v$ is matched with a vertex in 
$X_1$. It is easy to see that the matching $M' := M \cup \{ (v', v'')\}$ is a perfect
matching in $G'$. 
The other case when $v$ is matched with a vertex in $X_2$ is similar.

Consider a perfect matching $M'$ in $G'$. 
Removing the auxiliary edge from $M'$ and identifying the vertices $v$, $v'$ and
$v''$ will give us a perfect matching in $G$.
\end{proof}

\subsection{Component Tree}
Wagner \cite{Wag37} and Asano \cite{Asa85} gave exact characterizations 
of $K_5$-free graphs and $K_{3,3}$-free graphs, respectively.
These characterizations
essentially mean that any graph in these two classes
 can be constructed by taking $3$-clique-sums
of graphs which are either planar or have size bounded by $8$.

\begin{theorem}{\cite{Asa85}}
Let $\mathcal{C}$ be the class of all planar graphs together with 
the $5$-vertex clique $K_5$. Then $\chev{\mathcal{C}}_2$ is the class
of $K_{3,3}$-free graphs. 
\end{theorem}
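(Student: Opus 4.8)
The plan is to prove the two inclusions separately, and the easy one first. For $\chev{\mathcal{C}}_2 \subseteq \{K_{3,3}\text{-free}\}$, note that every planar graph is $K_{3,3}$-free (otherwise it would contain the nonplanar graph $K_{3,3}$ as a minor), and $K_5$ is $K_{3,3}$-free simply because $K_{3,3}$ has six vertices while a minor cannot have more vertices than the host graph. It then suffices to show that the class of $K_{3,3}$-free graphs is closed under $2$-clique-sums; combined with structural induction on the clique-sum construction this gives the inclusion. Closure follows from the standard fact that if $H$ is $(k{+}1)$-connected and is a minor of a $k$-clique-sum $G$ of $G_1$ and $G_2$, then $H$ is already a minor of $G_1$ or of $G_2$: a branch-set model of $H$ meets the shared $k$-clique in at most $k$ branch sets, so the remaining branch sets are confined to one side of the separation; since $H$ is $(k{+}1)$-connected with more than $k$ vertices, it cannot be split across the $\le k$ shared vertices, which forces the entire model into a single $G_i$ (after intersecting the straddling branch sets with that side, which remains connected because the clique edge lies in both summands). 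Applying this with $k=2$ and $H=K_{3,3}$, which is $3$-connected, finishes this direction.

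For the reverse inclusion $\{K_{3,3}\text{-free}\} \subseteq \chev{\mathcal{C}}_2$, I would induct on $\abs{V(G)}$. If $G$ is not $3$-connected it admits a separation $(A,B)$ of order at most $2$ with both sides proper; writing $G_A$ and $G_B$ for the two sides with the (at most two) shared vertices made into a clique, $G$ is the $2$-clique-sum of $G_A$ and $G_B$, each of which is a minor of $G$ (contract a path through the opposite side to realise the shared edge), hence $K_{3,3}$-free, and strictly smaller; the induction hypothesis then puts $G_A, G_B \in \chev{\mathcal{C}}_2$, so $G \in \chev{\mathcal{C}}_2$. Disconnected graphs and graphs with a cut vertex are the order-$0$ and order-$1$ instances of this step. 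Equivalently, and more in line with Section~\ref{sec:componentTree}, one may pass directly to Tutte's decomposition of a $2$-connected graph into its $3$-connected components glued along virtual $2$-cliques: cycles and bonds are planar, and each triconnected component is (essentially) a minor of $G$ and hence $K_{3,3}$-free, so the whole argument rests on the following Key Lemma: \emph{every $3$-connected $K_{3,3}$-free graph is planar or isomorphic to $K_5$.} Since planar graphs and $K_5$ both lie in $\mathcal{C}$, the Key Lemma closes this direction.

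To prove the Key Lemma, suppose $G$ is $3$-connected, $K_{3,3}$-free, and nonplanar. By Kuratowski's theorem $G$ contains a subdivision of $K_5$ or of $K_{3,3}$; the latter is impossible, because $K_{3,3}$ has maximum degree $3$ and so a $K_{3,3}$ minor would already yield a $K_{3,3}$ subdivision. Hence $G$ contains a subdivision $S$ of $K_5$ with branch vertices $u_1,\dots,u_5$. If $G$ were strictly larger than $S$, or if some branch path of $S$ had an internal vertex, we could pick such a vertex $w$ and use $3$-connectivity (the fan lemma) to obtain three internally disjoint paths from $w$ into $S$ ending at three distinct vertices of $S$; by routing these together with four of the ten branch paths of $S$ one exhibits six branch sets realising $K_{3,3}$ as a minor --- for instance, grouping $\{u_i,u_j,u_k\}$ against $\{u_l,u_m,\,\text{(blob around }w\text{)}\}$ when $w$ is an extra vertex attached to three branch vertices, and an analogous grouping when $w$ lies on a branch path --- contradicting $K_{3,3}$-freeness. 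Therefore $V(G)=V(S)=\{u_1,\dots,u_5\}$ with every branch path an edge, i.e.\ $G = K_5$, which proves the Key Lemma and hence the theorem.

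The routine ingredients here are the closure lemma, the reduction of a general $K_{3,3}$-free graph to $3$-connected pieces, and the appeal to Kuratowski's theorem. The main obstacle I anticipate is the last case analysis inside the Key Lemma: showing that every $3$-connected graph properly containing a $K_5$-subdivision --- whether the excess is an extra vertex, an extra edge, or a subdivided branch path --- already contains a $K_{3,3}$ minor, with all the auxiliary paths chosen pairwise disjoint and avoiding the branch vertices (and re-grouping or re-routing whenever a fan path lands on a branch path one wants to reuse). This is precisely the "minor-minimal nonplanar graph" bookkeeping that forms the technical core of Asano's argument.
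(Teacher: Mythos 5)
The paper does not actually prove this statement: it is imported verbatim from Asano \cite{Asa85} (the characterization goes back to Wagner and D.~W.~Hall), so there is no in-paper argument to compare yours against. Your outline is the standard textbook route and is structurally sound. The easy inclusion, via closure of the class of graphs with no $(k{+}1)$-connected minor $H$ under $k$-clique-sums (here $H=K_{3,3}$, $k=2$), is correct as sketched. One small logical remark: inside the Key Lemma you do not need the maximum-degree-$3$ observation to exclude a $K_{3,3}$-subdivision --- a subdivision is in particular a minor, so a $K_{3,3}$-minor-free graph trivially contains no $K_{3,3}$-subdivision; the degree-$3$ fact is the converse implication and is not what is being used. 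The hard inclusion correctly reduces, via separations of order at most $2$ (equivalently Tutte's decomposition into triconnected components, matching Section~\ref{sec:componentTree}), to the Key Lemma that every $3$-connected, nonplanar, $K_{3,3}$-minor-free graph is $K_5$; this is exactly Hall's theorem, and your Kuratowski-plus-fan strategy is its standard proof. The supporting steps check out: each side of a $2$-separation of a $2$-connected graph is a minor of $G$, since every component of $G-\{a,b\}$ attaches to both $a$ and $b$ and so supplies a path to contract onto the virtual edge.

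The one genuine gap is the one you flag yourself: the case analysis showing that a $3$-connected graph properly containing a $K_5$-subdivision has a $K_{3,3}$ minor is asserted, not carried out. The difficulty is real --- the three fan paths from the extra vertex $w$ may terminate at interior vertices of branch paths rather than at branch vertices, and one must then re-partition the branch vertices and absorb path segments into branch sets while keeping all nine connecting paths internally disjoint. (Note also why $3$-connectivity is indispensable here: $K_5$ with a single subdivided edge is $K_{3,3}$-minor-free, so no argument ignoring connectivity can work.) As written, your proposal is a correct and complete reduction of the theorem to this classical lemma, together with a plausible plan of attack on it, but not yet a self-contained proof of the lemma itself.
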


\begin{theorem}{\cite{Wag37, Khu88}}
Let $\mathcal{C}$ be the class of all planar graphs together with 
the four-rung M\"{o}bius ladder $V_8$ (Figure~\ref{fig:v8}). Then $\chev{\mathcal{C}}_3$ is the class
of $K_{5}$-free graphs. 
\end{theorem}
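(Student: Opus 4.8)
The plan is to establish the two inclusions separately, the direction that every graph in $\chev{\mathcal{C}}_3$ is $K_5$-free being the easy one. For this direction I would first note that $K_5$ is non-planar, so no planar graph has a $K_5$ minor, and check directly that the $8$-vertex M\"{o}bius ladder $V_8$ has no $K_5$ minor: $V_8$ is cubic, so in any collection of five branch sets at least one branch set contains a vertex of $V_8$ of degree $\ge 4$ and hence has at least two vertices, and a short case analysis on the antipodal-diagonal adjacency pattern of $V_8$ then shows five pairwise adjacent branch sets cannot coexist. Next I would prove the closure property that a $k$-clique-sum with $k \le 3$ of two $K_5$-free graphs is again $K_5$-free; this rests on the fact that $K_5$ is $4$-connected, so that if $G = G_1 \cup G_2$ with $V(G_1)\cap V(G_2)$ a clique of size $\le 3$, then a $K_5$ minor of $G$ would have to be rerouted around this small separator and hence would already be a $K_5$ minor of $G_1$ or of $G_2$, contradicting $K_5$-freeness of both. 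Iterating this along the clique-sum construction gives the inclusion.

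For the reverse inclusion I would reduce an arbitrary $K_5$-free graph $G$ to the $3$-connected case by the standard decompositions. The block-cut tree writes $G$ as a $1$-clique-sum of its $2$-connected blocks together with single edges $K_2$, which are planar; and the Tutte/SPQR decomposition of each $2$-connected block exhibits it as a sequence of $2$-clique-sums of cycles, multi-edge bonds (both planar) and $3$-connected graphs. Since every piece appearing in these decompositions is a minor of $G$, it is itself $K_5$-free, so the problem reduces to showing that a $3$-connected $K_5$-free graph $H$ is either planar or isomorphic to $V_8$.

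This last claim is the core structural step and the place where I expect the real difficulty to lie. If $H$ is planar there is nothing to do; otherwise, by Kuratowski--Wagner together with $K_5$-freeness, $H$ contains a subdivision $S$ of $K_{3,3}$, which I would take minimal. I would then analyze the bridges of $S$ in $H$, using $3$-connectivity to restrict how each bridge can attach (forcing, after the minimal choice, at least three attachments per bridge) and using $K_5$-freeness to forbid any attachment pattern that would, after contracting a bridge to a single vertex, complete a $K_5$ minor. A careful case analysis on the admissible attachment patterns then forces the subdivision paths of $S$ to be single edges and pins down exactly the cyclic-plus-antipodal adjacency structure of the M\"{o}bius ladder, yielding $H \cong V_8$; this case analysis is the main obstacle, everything else being routine structural bookkeeping. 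Assembling the two directions, the clique-sum decomposition of any $K_5$-free graph has all leaves planar graphs or copies of $V_8$ and all sums of order at most $3$, so $\chev{\mathcal{C}}_3$ is precisely the class of $K_5$-free graphs.
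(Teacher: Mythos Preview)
The paper does not prove this theorem; it is quoted as a classical structural result of Wagner (with the stated references) and used as a black box. So there is no proof in the paper to compare against, and your plan must stand on its own.

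Your easy direction is essentially fine, modulo a small slip in the $V_8$ argument: $V_8$ is $3$-regular, so \emph{no} vertex has degree $\ge 4$; the correct conclusion is that every branch set of a putative $K_5$ minor must contain at least two vertices, forcing at least $10$ vertices in an $8$-vertex graph.

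The hard direction, however, has a genuine gap. It is \emph{not true} that every $3$-connected $K_5$-free graph is planar or isomorphic to $V_8$. A concrete counterexample is the join $K_3 \ast \overline{K_3}$: three pairwise adjacent vertices $a,b,c$ together with three further vertices $d,e,f$, each adjacent to all of $a,b,c$ and to nothing else. This graph is $3$-connected; it contains $K_{3,3}$ as a spanning subgraph, hence is non-planar; it has only six vertices, hence is not $V_8$; and it is $K_5$-free because $d,e,f$ have degree $3$ and so cannot be singleton branch sets, while five branch sets on six vertices would force at least four singletons.

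What your plan is missing is exactly the step that explains why the theorem requires $3$-clique-sums and not merely $\le 2$-clique-sums: after reducing to the $3$-connected case one must \emph{still} decompose along $3$-separations. In the counterexample, $\{a,b,c\}$ is a separating triangle, and splitting along it exhibits the graph as a $3$-clique-sum of three copies of $K_4$. The correct core lemma is of the form: a $3$-connected non-planar $K_5$-free graph either admits a $3$-separation (along which one splits, after checking that adding the virtual triangle on the separator keeps both sides $K_5$-free) or is isomorphic to $V_8$. Your bridge analysis of a $K_{3,3}$ subdivision points in the right direction for that lemma, but it has to be carried out under the extra hypothesis that no $3$-separation exists; without that hypothesis the conclusion you aim for is false.
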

\begin{figure}
\centering
  \begin{picture}(0,0)%
  \includegraphics{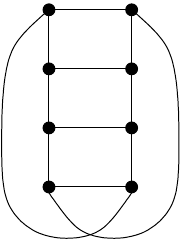}%
  \end{picture}%
  \input{v8.figtex}%

\caption{The four-rung M\"{o}bius ladder $V_8$.}
\label{fig:v8}
\end{figure}

As mentioned in Section~\ref{sec:biconnected}, we can assume that the given graph
is biconnected. 
It is known that for any given biconnected $K_{3,3}$ graph $G$,
its component tree can be constructed in log-space \cite[Lemma 3.8]{TW14}. 
The components here are all planar or $K_5$, which share separating pairs. 
Also, for any given biconnected $K_5$-free graph $G$, 
its component tree can be constructed in log-space \cite[Definition 5.2, Lemma 5.3]{STW14}. 
The components here are all planar or $V_8$.
They can share a separating pair or a separating triplet. 
The planar embedding of a planar component can be computed in log-space \cite{AM04,Rei08}. 

The component tree defined in \cite{TW14,STW14} slightly differs from 
our definition in Section~\ref{sec:componentTree}. 
They have an extra component for each separating set.
This component is connected to all the components which share this separating set. 
Moreover, whenever there is a real edge between two nodes of a separating set,
it is represented by a $3$-bond component (one real edge and two parallel virtual edges).
The $3$-bond component is also connected to the corresponding separating set node.
For our purposes, these two kinds of components are not needed.

For any given biconnected $K_{3,3}$-free graph or $K_5$-free graph $G$,
we start with the component trees which are constructed by \cite{TW14,STW14}.
We show how to modify the component tree, in log-space, 
to have 
the assumptions made in Section~\ref{sec:nonzeroCirc}.

Applying the clique-sum operations on the modified component tree will give us the actual modified graph $G'$.
We will argue that all these modifications in $G$
are just repeated application of the vertex-split operation (Lemma~\ref{lem:vertexSplit})
in $G$. Thus, these are matching preserving. As mentioned earlier,
from a perfect matching in $G'$, one can get a perfect matching in $G$
by just deleting the auxiliary vertices and edges created in the vertex-split operations.

We reiterate here that there may be some pairs/triplet in the graph $G$ (or $G'$), such 
that their removal will make the graph disconnected, but still the graph is not decomposed
with respect to them and they do not play any role in the component tree. 
Here, by separating pair/triplet we only mean those pairs/triplets which
are shared by different components of the component tree. 

\paragraph{(i) Removing ``$3$-bond" components:}
For all the $3$-bond components we do the following:
Remove the $3$-bond component. Let $\tau$ be the separating set and
$C_\tau$ be the corresponding node in the component tree,
where this $3$-bond component is attached (a $3$-bond component is always a leaf).
Take an arbitrary component attached to $C_\tau$.
This component will have a virtual clique for $\tau$. 
 Make an appropriate real edge parallel to the existing virtual edge,
in this virtual clique corresponding to $\tau$.
Note that if this component was planar, it will remain so.
Moreover, it is easy to adjust the planar embedding. 
Clearly, this operation can be done in log-space.
This does not change the actual graph $G$ in any way.

\paragraph{(ii) Any separating set is shared by at most two components:}
Let $\tau$ be a separating set shared by $m$ components $G_1, G_2, \dots, G_m $.
Let the cardinality of $\tau$ is $t$ (t can be 2 or 3). 
Let us define a gadget $M$ as follows: 
it has three sets of nodes $\{a_i \mid 1 \leq i \leq t\}$, 
$\{b_i \mid 1 \leq i \leq t\}$, $\{c_i \mid 1 \leq i \leq t\}$. 
For each $i$, connect $a_i$ with $b_i$ by a length-$2$ path and also 
connect $a_i$ with $c_i$ by a length-$2$ path. 
Make $3$ virtual cliques each of size $t$,
one each for nodes $\{a_i\}_i$, $\{b_i\}_i$ and $\{c_i\}_i$.
Thus, three components can be attached with $M$.

Now, we construct a binary tree $T$ which has exactly $m-1$ leaves. 
Replace leaves of $T$ with components $G_2, G_3, \dots, G_m$.
Replace all other nodes of $T$ with copies of the gadget $M$.
Further, make an edge between component $G_1$ and the root of $T$ 
(see Figure~\ref{fig:binaryTree}).
Any node of type $M$, in this binary tree, shares its separating set $\{a_i\}_i$
with its parent node, shares its separating set $\{b_i\}_i$
with its left child node and shares its separating set $\{c_i\}_i$
with its right child node. 
The components $G_2,G_3, \dots, G_m$ share their copy of $\tau$ with
their respective parent nodes in the tree $T$.
The component $G_1$ shares its copy of $\tau$ with the root node of $T$.

\begin{figure}
\centering
  \begin{picture}(0,0)%
  \includegraphics{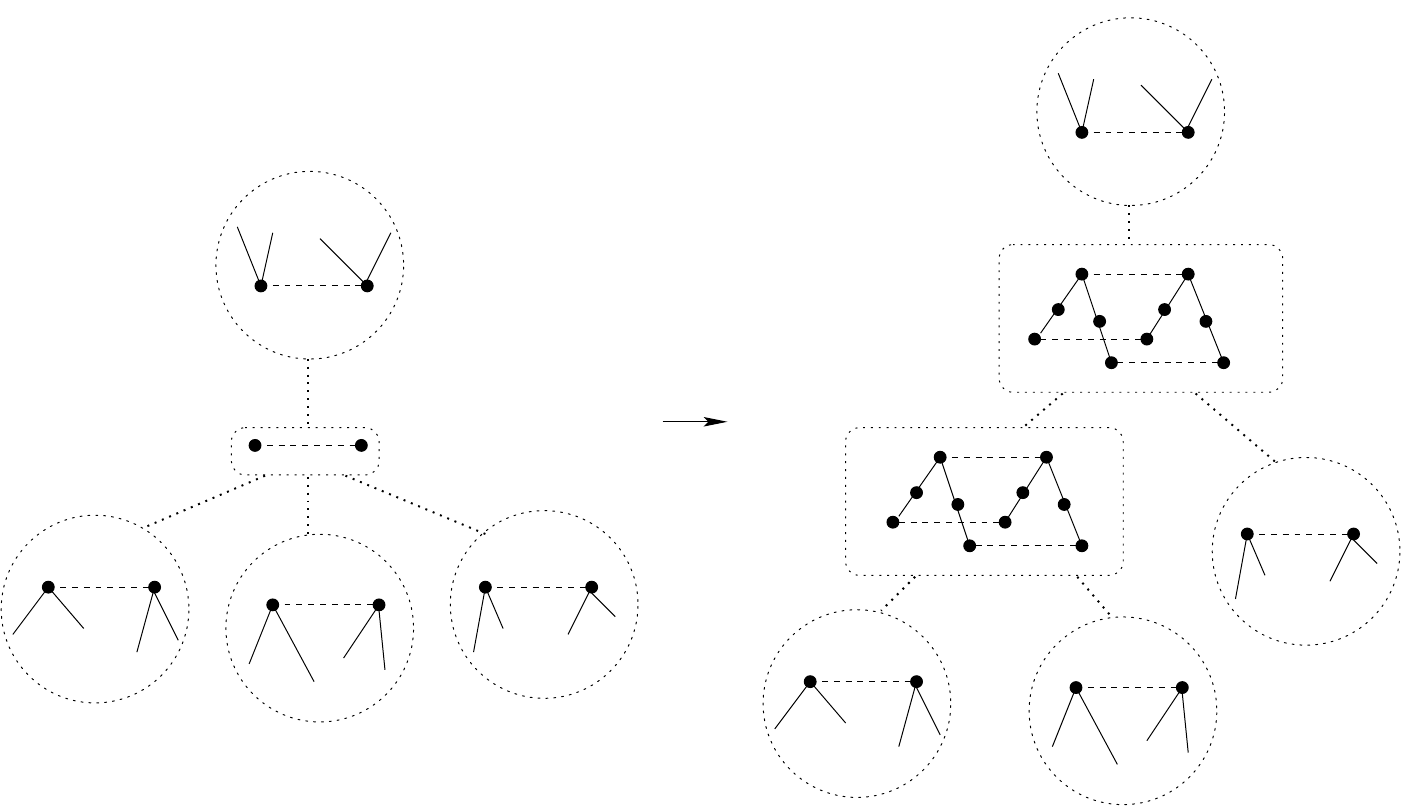}%
  \end{picture}%
  \input{binaryTree.figtex}%

\caption{(a) A separating pair $\chev{u,v}$ is shared by four components $G_1, G_2, G_3, G_4$.
(b) Copies of $\chev{u,v}$ connected by length-$2$ paths, to form a binary tree. 
Different copies are shared by different components. 
  }
\label{fig:binaryTree}
\end{figure}

Doing this procedure for every separating set will ensure that every separating
set is shared between at most two components.
Moreover, now there is no extra component for the separating set, and
the components which share a separating set are joined directly by an edge.
A binary tree with $m-1$ leaves can be easily constructed in log-space
(Take nodes $\{x_1,x_2, \dots, x_{2m-3}\}$, $x_i$ has children $x_{2i}$ and $x_{2i+1}$).
All the other operations here are local like deleting and creating edges and 
changing vertex labels. Thus it can be done log-space.

Now, we want to argue that this operation is matching preserving for 
the actual graph $G$. 
Let us view this operation as a repeated application of the following operation:
Partition the the set of components $\{G_2, G_3, \dots G_m\}$ in two parts,
say $G'_1$ and $G''_1$. Now, take a copy of the gadget $M$ and connect it to
all three components $G_1$, $G'_1$ and $G''_1$. 
$M$ shares its separating sets $\{a_i\}_i$, $\{b_i\}_i$ and $\{c_i\}_i$
with $G_1$, $G'_1$ and $G''_1$ respectively. 
In the actual graph $G$, this operation separates the edges incident on a vertex in $\tau$
into three parts: edges from $G_1$, $G'_1$ and $G''_1$ respectively.
These three sets of edges are now incident on three different copies of the vertex.
Moreover two of the copies are connected to the first copy via a length-$2$ path.
Hence, it is easy to see this as applying vertex-split (Lemma~\ref{lem:vertexSplit}) operation twice. 
Now, we recursively do the same operation after partitioning the
set of components $G'_1$ and $G''_1$ further. 
Thus, the whole operation can be seen as a vertex-split operation applied many times
in the actual graph $G$. 

Instead of a binary tree we could have also taken 
a tree with one root and $m-1$ leaves. This operation would also be
matching preserving but the component size will depend on $m$. 
On the other hand, in our construction the new components created have
size at most $15$ (number of real edges is bounded by 12). 
Thus, the graph $G'$ remains in class $\Gt$.

\paragraph{(iii) Any vertex is a part of at most one separating set:}
Let $a$ be vertex in a component $C$, where it is a part of
separating sets $\tau_1, \tau_2, \dots, \tau_m$.
We apply the vertex-split operation (Lemma~\ref{lem:vertexSplit})
 on $a$, $m$ times, to split $a$ into a star.
Formally, create a set of $m$ new nodes $a_1, a_2, \dots, a_m$. 
Connect each $a_i$ with $a$ by a path of length $2$. 
For each $i$, replace $a$ with $a_i$ in the separating set $\tau_i$.
Let the updated separating set be $\tau'_i$.
The edge in the component tree which corresponds to $\tau_i$,
should now correspond to $\tau'_i$.
Any real edge in the component $C$ which is incident on $a$, 
remains that way (see Figure~\ref{fig:star}).
Clearly, doing this for every vertex in all the components will ensure
that every vertex is a part of at most one separating set. 

It is easy to see that a planar component will remain planar after this operation. 
The modification of the planar embedding and other changes here 
are local and can be done in log-space.

\begin{figure}
\centering
  \begin{picture}(0,0)%
  \includegraphics{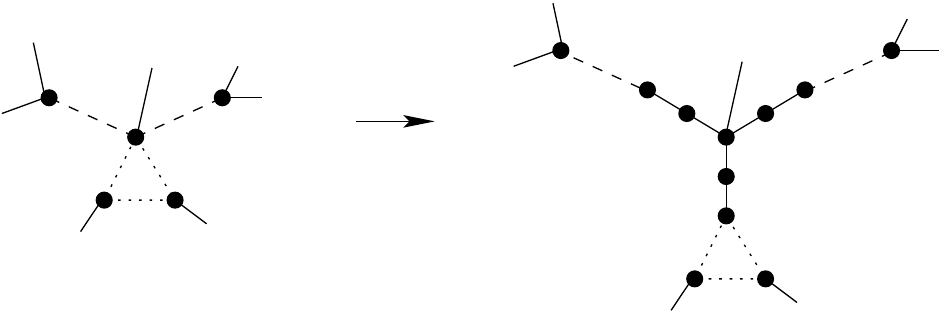}%
  \end{picture}%
  \input{star.figtex}%

\caption{(a) Vertex $a$ is a part of two separating pairs $\chev{a,d}$ and $\chev{a,e}$
and a separating triplet $\chev{a,b,c}$.
 (b) Vertex-Split is applied on vertex $a$, $3$ times, to split it into a star. 
The new separating sets are $\chev{a_1,b,c}$, $\chev{a_2,d}$ and $\chev{a_3,e}$.
}
\label{fig:star}
\end{figure}

Now, we want to argue that this operation is matching preserving. 
Let us see how does this operation modifies the actual graph $G$.
Let $C_i$ be the component which shares $\tau_i$ with $C$. 
Removal of $\tau_i$ would split the graph $G$ into two components,
say $G'_i$ and $G''_i$, where $G'_i$ is the one containing $C$.
The above operation means that any edge in $G''_i$ which was 
incident on $a$, is now incident on $a_i$ instead of $a$.
As each $a_i$ is connected to $a$ by a length-$2$ path, 
this operation can be seen as a repeated application of the
vertex-split operation (Lemma~\ref{lem:vertexSplit}). 
Thus, this operation is matching preserving. 

{\em Increase in the size of non-planar components:}
After this operation the size of each component will grow. 
Let us find out the new bound on the size of constant-sized graphs.
For a $K_{3,3}$-free graph, all non-planar components are of type $K_5$.
Moreover, they are only involved in a $2$-clique-sum. 
Hence, it can have at most ${5 \choose 2} =10$ separating pairs. 
In this case, each vertex is a part of four separating pairs. 
Thus, each vertex will be split into a $4$-star, creating $8$ new vertices
and $8$ new edges. Totally, there will be $45$ vertices and 
$40$ real edges. Additionally, there can be some already existing real edges,
at most $10$. Thus, the total number of edges is bounded by $50$.

For a $K_5$-free graph, all non-planar components are of type $V_8$.
Moreover, they do not have a $3$-clique, thus, can only be be involved in 
a $2$-clique-sum. 
In worst case, it has $12$ separating pairs. 
Each vertex is a part of $3$ separating pairs. 
Each vertex will be split into a $3$-star, creating $6$ new vertices
and $6$ new edges. Totally, there will be $56$ vertices and $48$ edges. 
Thus, together with already existing real edges, total number of real 
edges is bounded by $60$.

\paragraph{(iv) A separating triplet in a planar component already forms a face:}
If a separating triplet does not form a face in a planar component. Then
the two parts of the graph, one inside the triplet and the other outside,
can be considered different components sharing this triplet. 
In fact, the construction in \cite{STW14} already does this. 
When they decompose a graph with respect to a triplet, 
the different components one gets by deleting this triplet are all considered
different components in the component tree.

\section{Discussion}
One of the open problems is to construct an isolating weight assignment
for a more general class of graphs, in particular, for all bipartite graphs. 
Note that {\em nonzero circulation} for every cycle is sufficient but not necessary for constructing
an isolating weight assignment. 
Although existence of an isolating weight assignment can be shown by randomized arguments,
no such arguments exist for showing the existence of a nonzero circulation weight assignment. 
It needs to be investigated whether it is possible to achieve a nonzero circulation for every cycle 
(with polynomially bounded weights)
in a complete bipartite graph?
Log-space construction of such a weight assignment would imply 
that Bipartite Perfect Matching is in $\NC$ and answer the $\NL$=$\UL$? question.

Till now, isolation of a perfect matching is known only for those graphs
for which counting the number of perfect matchings is easy. 
On the other hand, $O(\log n)$-genus bipartite graphs 
and general planar graphs are two classes of graphs
for which counting is easy, but 
construction of an isolating weight assignment
 is not known. 
It is surprising, as counting seems to be a much harder problem than isolation.

\section{Acknowledgements}
RG thanks TCS PhD research fellowship for support.

\bibliographystyle{alpha}
\bibliography{k33free}

\appendix
\section{Skipped proofs}
\label{sec:appendix}

Here we prove the lemmas whose proofs were skipped in the main part of the paper.
\begin{lem:circulationFaces}
In a planar graph with a given planar embedding, circulation of a cycle in the clockwise orientation 
is the sum of circulations of the faces inside it.
\end{lem:circulationFaces}
\begin{proof}
We give the proof using mathematical induction on the number of faces inside the cycle.

Consider a planar graph $G=(V,E)$.
For any cycle $C$, its circulation is denoted by $w(C)$.

\textbf{Base case:}
The base case is a cycle containing only one face inside it. 
By definition of the circulation of a face, 
for a clockwise-oriented cycle, its circulation equals the circulation of the face. 

\textbf{Induction hypothesis:} The circulation of a cycle having $k$ faces is the sum of circulations of the faces inside it.

\textbf{Induction step:}
Consider a clockwise-oriented cycle $C$ having $k$~faces, $\allowbreak f_1, f_2, \allowbreak\dots, f_k$, inside it.
Now consider a cycle $C'$ having the same orientation as $C$ and with all but one face of $C$ inside it. Without loss of generality, let this face be $f_k$.

We use the notation $E_{ij}$ to show the set of edges shared between faces $f_i$ and $f_j$, taken in a clockwise direction around $f_i$.

Denote by $S_k$ the set of clockwise edges (w.r.t $f_k$) shared between $f_k$ and other faces inside $C$, that is, $S_k = \cup_{i=0}^{k-1}E_{ki}$.
Let $S_{-k}$ denote the same set of edges taken in the opposite direction.

Also, we use the notation $E(C)$ to denote the set of edges taken by a cycle $C$.
Similarly, $E_k$ denotes the set of edges around a face $f_k$, taken in the clockwise direction.
Similar to $S_{-k}$, we can define $E_{-k}$.

We can see that $E(C)\setminus E(C') = E_k\setminus S_k$, and $E(C')\setminus E(C) = S_{-k}$.
\begin{align*}
w(C) &= w(C') + w(E(C)\setminus E(C')) -  w(E(C')\setminus E(C))\\
&= w(C') + w(E_k\setminus S_k) - w(S_{-k})\\
&= w(C') + w(E_k) - w(S_k) - w(S_{-k})\\
&= w(C') + w(E_k) - w(S_k) + w(S_{k}) &&\text{($w$ is skew-symmetric)}\\
&= \sum_{i=1}^{k-1}w(f_i) + w(E_k) &&\text{(Induction hypothesis)}\\
&= \sum_{i=1}^{k-1}w(f_i) + w(f_k) &&\text{(Lemma \ref{lem:faceToEdge})}
\end{align*}

Thus, the circulation of $C$ is the sum of circulations of the faces contained in it.
\end{proof}

\begin{lem:faceToEdge}
Let $G(V,E)$ be a planar graph with $F$ being its set of inner faces in some planar embedding.
For any given function on the inner faces $w' : F \to \Z$, a skew symmetric 
weight function $w \colon \vec{E} \to \Z$ can be constructed in log-space such that every face $f \in F$
has circulation $w'(f)$.
\end{lem:faceToEdge}
\begin{proof}
The construction in \cite{Kor09} gives $+1$ circulation to every face of the graph and 
is in $\NC$. We modify it to assign arbitrary circulations to the faces and argue that it
works in log-space.

 Let $G^*$ be the dual graph of $G$ and $T^*$ be a spanning tree of $G^*$.
The dual graph can be easily constructed in log-space from the planar embedding. 
See \cite{NT95,Rei08}  for log-space construction of a spanning tree.
Make the tree $T^*$ rooted at the outer face of $G$.
All the edges in $E \setminus E(T^*)$ will get weight $0$.
For any node $f$ in $G^*$ (a face in $G$),
let $T_f^*$ denote the subtree of $T^*$ rooted at $f$.
Let $w'(T_f^*)$ denote the total sum of the weights in the tree, 
i.e.\ $w'(T_f^*) = \sum_{f_1 \in T_f^*} w'(f_1)$. 
This function can be computed for every node in the tree $T^*$,
 by the standard log-space
tree traversal.
For any inner face $f$, 
let $e_f$ be the edge connecting $f$ to its parent in the dual tree $T^*$. 
We assign the edge $e_f$, weight $w'(T_f^*)$ in clockwise direction (w.r.t.\ face $f$).

We claim that under this weight assignment, circulation of any inner face $f$ 
is $w'(f)$.
To see this, let us say $f_1, f_2, \dots, f_k$ are the children of $f$ in the dual tree $T^*$.
These nodes are connected with $f$ using edges $e_{f_1},e_{f_2}, \dots, e_{f_k}$ respectively. 
Now, consider the weights of these edges in the clockwise direction w.r.t.\ face $f$.
For any $1 \leq i \leq k$, weight of $e_{f_i}$ is $- w'(T_{f_i}^*)$ and
weight of $e_f$ is $w'(T_{f}^*)$.
Clearly, sum of all these weights is $w'(f)$.
\end{proof}

\end{document}